  \let\oldparagraph\paragraph
  \renewcommand{\paragraph}{
    \@ifstar
      \xxxParagraphStar
      \xxxParagraphNoStar
  }
  \newcommand{\xxxParagraphStar}[1]{\oldparagraph*{#1}\mbox{}}
  \newcommand{\xxxParagraphNoStar}[1]{\oldparagraph{#1}\mbox{}}
  \let\oldsubparagraph\subparagraph
  \renewcommand{\subparagraph}{
    \@ifstar
      \xxxSubParagraphStar
      \xxxSubParagraphNoStar
  }
  \newcommand{\xxxSubParagraphStar}[1]{\oldsubparagraph*{#1}\mbox{}}
  \newcommand{\xxxSubParagraphNoStar}[1]{\oldsubparagraph{#1}\mbox{}}
\providecommand{\tightlist}{%
  \setlength{\itemsep}{0pt}\setlength{\parskip}{0pt}}\usepackage{longtable,booktabs,array}
\patchcmd\longtable{\par}{\if@noskipsec\mbox{}\fi\par}{}{}
\newsavebox\pandoc@box
\newcommand*\pandocbounded[1]{
  \sbox\pandoc@box{#1}%
  \Gscale@div\@tempa{\textheight}{\dimexpr\ht\pandoc@box+\dp\pandoc@box\relax}%
  \Gscale@div\@tempb{\linewidth}{\wd\pandoc@box}%
  \ifdim\@tempb\p@<\@tempa\p@\let\@tempa\@tempb\fi
  \ifdim\@tempa\p@<\p@\scalebox{\@tempa}{\usebox\pandoc@box}%
  \else\usebox{\pandoc@box}%
  \fi%
}
\def\fps@figure{htbp}
\newcommand{\expect}[1]{ \mathbb{E} \left[ #1 \right] }
\newcommand{\cexpect}[2]{ \mathbb{E} \left[ #1 \: \big \vert #2 \: \right] }
\newcommand{\proba}[1]{ \mathbb{P} \left( #1 \right) }
\newcommand{\cproba}[2]{ \mathbb{P} \left( #1 \: \big \vert \: #2 \right) }
\newcommand{\setdef}[1]{ \left\{ #1 \right\} }
\newcommand{\indic}[1]{ \mathbbm{1}_{ \left\{ #1 \right\} } }
\newcommand{\indep}{\rotatebox[origin=c]{90}{$\models$}}
\DeclareMathOperator{\argmax}{argmax}
\newcommand{\history}[1]{\mathcal{H} \left( #1 \right)}
\newcommand{\dotbar}[1]{\bar{#1}_{\boldsymbol{\cdot}}}
\newcommand{\ExpressionPosteriorDensity}{
\left[
  \Phi
  \left(
    (-1)^{1-x}
    \frac{
      a + b \mu_{1+x} - \bar{y}_{\boldsymbol{\cdot}}^{(1)}
    }{
      \sqrt{\omega^2 + \sigma_{1+x}^2 b^2}
    }
  \right)
  \sigma_{1+x}
\right]^{-1}
\Phi \left( (-1)^{1-x} \frac{ (a + b \theta) - \bar{y}_{\boldsymbol{\cdot}}^{(1)} }{\omega} \right)
\phi \left(\frac{\theta - \mu_{1+x}}{\sigma_{1+x}}\right) }
\newcommand{\ExpressionPosteriorMean}{    \mu_{1+x}
    +
    (-1)^{1-x} b
    \frac{
	\sigma^2_{1 + x}
    }{\sqrt{\omega^2 + \sigma_{1+x}^2 b^2}}
    \frac{
    \phi
    \left(
    \frac{a + b \mu_{1 + x}  - \bar{y}_{\boldsymbol{\cdot}}^{(1)} }{\sqrt{\omega^2 + \sigma_{1+x}^2 b^2}}
    \right)
    }
    {
    \Phi
    \left(
    (-1)^{1-x}
    \frac{a + b \mu_{1 + x}  - \bar{y}_{\boldsymbol{\cdot}}^{(1)} }{\sqrt{\omega^2 + \sigma_{1+x}^2 b^2}}
    \right)
    }
}
\definecolor{codegreen}{rgb}{0,0.6,0}
\definecolor{codegray}{rgb}{0.5,0.5,0.5}
\definecolor{codepurple}{rgb}{0.58,0,0.82}
\definecolor{backcolour}{rgb}{1,1,1}
\lstdefinestyle{mystyle}{
    backgroundcolor=\color{backcolour},   
    commentstyle=\color{codegreen},
    keywordstyle=\color{magenta},
    numberstyle=\tiny\color{codegray},
    stringstyle=\color{codepurple},
    basicstyle=\ttfamily\footnotesize,
    breakatwhitespace=false,         
    breaklines=true,                 
    captionpos=b,                    
    keepspaces=true,                 
    numbers=left,                    
    numbersep=5pt,                  
    showspaces=false,                
    showstringspaces=false,
    showtabs=false,                  
    tabsize=2
}
\tikzset{>=latex'}
  \renewcommand*\contentsname{Table of contents}
  \newcommand\contentsname{Table of contents}
  \renewcommand*\listfigurename{List of Figures}
  \newcommand\listfigurename{List of Figures}
  \renewcommand*\listtablename{List of Tables}
  \newcommand\listtablename{List of Tables}
  \renewcommand*\figurename{Figure}
  \newcommand\figurename{Figure}
  \renewcommand*\tablename{Table}
  \newcommand\tablename{Table}
\theoremstyle{plain}
\newtheorem{lemma}{Lemma}[section]
\theoremstyle{plain}
\newtheorem{proposition}{Proposition}[section]
\theoremstyle{remark}
\title{Revisiting the apparent discrepancy between the frequentist and
Bayesian interpretation of an adaptive design}
\author{Simon Bang Kristensen \and Erik Thorlund Parner}
\date{}
\begin{document}
\maketitle

\begin{abstract}
\noindent
It is generally appreciated that a frequentist analysis of a group sequential trial must in order to avoid inflating type I error account for the fact that one or more interim analyses were performed. It is also to a lesser extent realised that it may be necessary to account for the ensuing estimation bias. The group sequential design is an instance of the more general concept of adaptive clinical trials where a study may change its design dynamically as a reaction to the observed data. There is a widespread perception that one may circumvent the statistical issues associated with the analysis of an adaptive clinical trial by performing the analysis under a Bayesian paradigm. The root of the argument is that the Bayesian posterior is perceived as being unaltered by the data-driven adaptations. We examine the claim that the posterior distribution is unaltered by adaptations by analysing a simple trial with a single interim analysis. We approach the interpretation of the trial data under both a frequentist and Bayesian paradigm with a focus on estimation. The conventional result is that the interim analysis impacts the estimation procedure under the frequentist paradigm, but not under the Bayesian paradigm, which may be seen as expressing a "paradox" between the two paradigms. We argue that this result however relies heavily on what one would define as the universe of relevant trials defined by first samples of the parameters from a prior distribution and then the data from a sampling model given the parameters. In particular, in this set of trials, whether a connection exists between the parameter of interest and design parameters. We show how an alternative interpretation of the trial yields a Bayesian posterior mean that corrects for the interim analysis with a term that closely resembles the frequentist conditional bias. We conclude that the role of auxiliary trial parameters needs to be carefully considered when constructing a prior in an adaptive design.
\end{abstract}

\section{Introduction}\label{sec-introduction}

An adaptive design in the context of clinical trials is a study design
that may change, or adapt, dynamically as the study data is accrued. One
such commonly applied adaptation allows for interim analyses during the
study period where the hitherto collected data is analysed and the
results used to stop the trial if there is clear evidence in favor of a
specific treatment (stopping for efficacy) or if it is deemed unlikely
that the trial will terminate with a conclusive outcome even if carried
to its fruition (stopping for futility). Such group sequential designs
(\citet{jennisonGroupSequentialMethods1999}) may be viewed as a specific
subset of adaptive designs. In contrast, a \emph{fixed design} is a
design that is run in accordance with a prespecified study protocol
until the data collection is completed. Adaptations of the design come
with obvious advantages
(e.g.~\citet{pallmannAdaptiveDesignsClinical2018a}). Logistic advantages
include for example the possibility to stop a trial early thereby
requiring fewer patients and thus lowering on average the overall costs
associated with the study. Arguable, adaptive designs may also be said
to be more ethical, as the presence of interim analyses may allow a
clearly effective treatment to benefit patients with less delay than if
the trial had been required to run until its fixed termination while
avoiding the subjection of trial participants to an inferior treatment.

An adaptive design may also entail certain practical difficulties
(\citet{pallmannAdaptiveDesignsClinical2018a}). The main challenges are
however those that concern statistical inference. One issue that has
traditionally received much attention is the inflation of type I error.
For example, a trial employing multiple interim analyses at a given
level of significance will have a higher overall type I error than this
significance level, which may be seen simply as a problem of multiple
testing. Several procedures exist to counter this inflation of type I
error. A less well understood (\citet{bretz2009adaptive},
\citet{bauerTwentyfiveYearsConfirmatory2016}) problem is the consequence
of the adaptive design to estimation and particular to estimation bias
-- a study allowing to stop if there is evidence of a large treatment
effect will invariably overestimate the effect of the treatment on
average.

A line of inquiry that has received some attention is whether the
statistical issues associated with an adaptive design may be ameliorated
by approaching the analysis of the trial under a Bayesian framework. A
recent Lancet review of Bayesian methods for clinical trials
(\citet{goligherBayesianStatisticsClinical2024}) emphasises these
possibilities and argue that ``\emph{issues such as the original planned
sample size and stopping rule {[}\ldots{]} do not affect the Bayesian
posterior distribution}''. A blog by Frank Harrell (\citet{harrellblog})
takes a similar view that ``\emph{the stopping rule is unimportant when
interpreting the final evidence. Earlier data looks are irrelevant.}''.
Harrell frames this in terms of the calibration property of Bayesian
inference: Suppose that we sample parameters from our prior distribution
and data from a sampling model conditional on the parameters. This gives
us a universe of relevant trials, a universe that is defined by our
chosen prior distribution. We then recognise our observed data among
some of these relevant trials and ask: Among those relevant trials that
obtained the same data, what was the distribution of the parameters.
This frequency distribution is exactly the Bayesian posterior
distribution or we could say that the posterior is calibrated to the
distribution. This apparent property of Bayesian inference is in stark
contrast to the challenges concerning the frequentist concepts of type I
error and estimation bias as noted above. Seemingly, the implication is
that we can harvest the evident advantages of the adaptive design and
circumvent the statistical challenges providing that we analyse the
trial under the Bayesian framework.

A related but distinct issue arises in the context of the discussion
whether Bayesian inference is immune to the effect of selection,
i.e.~that the posterior distribution for a parameter is not altered even
if the specific parameter was chosen as a target of inference following
a selection process. The problem was studied in
\citet{dawidSelectionParadoxesBayesian1994} and again in
\citet{sennNoteConcerningSelection2008} and
\citet{mandelSelectionBiasConflict2009}. These investigations agree that
this apparent property of Bayesian inference is an artifact of a prior
on the parameter space that specify the parameters as being independent.
\citet{harvilleBayesianInferenceUnaffected2022} studied the problem more
generally and argued for basing the posterior distribution on that
conditional to selection, and further shows how this may also be
recharacterised in terms of using a different prior distribution.

Observed discrepancies between the frequentist and Bayesian framework
have conventionally been termed paradoxes in the literature, so for
example by \citet{dawidSelectionParadoxesBayesian1994}. As also noted by
several authors (e.g.~\citet{sennNoteConcerningSelection2008}), these
discrepancies are not true paradoxes as they are clearly anticipated by
the mathematics, but may be viewed as more apparent paradoxes that
describe a discrepancy where one might commonly be perceived to not
exist. Nevertheless, we will follow this tradition and speak of
``paradoxes'' for the remainder of the paper. Two classic paradoxes that
contrast frequentist and Bayesian inference are given by John Pratt in
the discussion paper \citet{savageFoundationsStatisticalInference1962}
in terms of measurement instruments, and by
\citet{lindleyStatisticalParadox1957} in the form of a disagreement
between the frequentist hypothesis test and Bayesian posterior
distribution.

In the following, we analyse the inferential repercussions of an
adaptive design under the frequentist and Bayesian paradigm by focusing
on a simple study described in \citet{sennStatisticalIssuesDrug2008}
(Chapter 19) that may be viewed as a paradox in the above sense.

Suppose that two investigators, \(A\) and \(B\), will run a simple study
to perform inference about an unknown parameter \(\Theta\). The study
will first collect \(n\) observations, and is then given the option to
stop if the average of the first \(n\) observations exceeds some
prespecified \(\Psi\). Otherwise, another \(n\) observations will be
collected. The two investigators will run the study in each their own
way. Investigator \(A\) will collect all \(2n\) data points regardless
of the outcome of the interim analysis of the data with \(n\)
observations. Investigator \(B\), however, will allow the trial to stop
after the first \(n\) observations and otherwise collect \(2n\)
observations.

We now imagine that the two investigators run their study and obtain
\emph{exactly} the same data and further that this data did not lead
investigator \(B\) to stop their study early. Thus, the two researchers
have arrived at exactly the same data sets by somewhat different means.
This begs (at least) the following questions,

\begin{enumerate}
\def\labelenumi{\arabic{enumi}.}
\tightlist
\item
  What may \(A\) and \(B\), respectively, infer about \(\Theta\) based
  on the obtained data?

  \begin{enumerate}
  \def\labelenumii{\arabic{enumii}.}
  \tightlist
  \item
    Under a Bayesian paradigm?
  \item
    Under a frequentist paradigm?
  \end{enumerate}
\item
  Why would \(A\) and \(B\) elect to conduct their studies differently?
\end{enumerate}

We recognise the design of investigator \(B\) as a simple group
sequential design with one interim analysis allowing the study to
terminate for efficacy halfway to the maximal sample size of \(2n\). The
trial run by investigator \(A\) may be viewed as a corresponding fixed
design.

As we shall elaborate on in the following, the standard conception is
that the two frequentist investigators must approach their inference
differently with investigator B having to compensate for their
intentions of stopping the trial, while the two investigators obtain the
same inference under a Bayesian paradigm. This apparent discrepancy is
why the situation may be characterised as a ``paradox''.

~\newline \noindent The remainder of the paper is structured as follows.
We first establish the setup and notation for the paper and then derive
a bound on the adaptations that may be introduced into a design without
altering the likelihood function by introducing what we call a
well-behaved design. We further motivate the posterior distribution in
terms of its calibration to a universe of relevant trials. We then
analyse the paradox above in the form of a simple adaptive design. We
show that in a setup which includes a nuisance parameter, the
discrepancy between the frequentist and Bayesian interpretation rests on
the specification of the parameter prior and in particular on the
dependence between the treatment effect of interest and the design
nuisance parameter. Throughout, we illustrate the different
interpretations of a small simulated example of the paradox. We finally
discuss the implications for the interpretation of an adaptive trial.

\section{Notation and general theory}\label{notation-and-general-theory}

\subsection{Frequentist and Bayesian
inference}\label{sec-setup-and-notation}

In the following, we discuss inference under a frequentist and Bayesian
paradigm and will need to define their interrelationship. This is
approached in the spirit of e.g.~\citet{EfronBradley2016CASI}, where
Figure~\ref{fig-twodim-inference} below illustrates the difference
between the two paradigms. Frequentist inference is concerned with the
conditional distribution of the data \(\mathcal{D}\) given a specific
value of the parameters \(\mathcal{P}\), i.e.~the distribution of data
under repeated sampling from the same underlying distribution as
parametrised by \(\mathcal{P}\). Bayesian inference may be viewed as
orthogonal in the sense, that it is concerned with the conditional
distribution of the parameters given the data. The Bayesian paradigm
places a prior distribution on the parameters thus effectively assuming
a full distribution on the two-dimensional space in the figure.
Inference concerning the posterior distribution of the parameter given
the data is achieved from the sampling model
\(\mathcal{D} \mid \mathcal{P}\) along with the prior distribution on
\(\mathcal{P}\) through Bayes formula, which states that the posterior
density is proportional to the product of the sampling density and the
prior density, i.e.
\begin{equation}\phantomsection\label{eq-bayes-formula}{
    f_{\mathcal{P} \mid \mathcal{D}} (\mathcal{p} \mid \mathcal{d}) = \frac{f_{\mathcal{D} \mid \mathcal{P}} (\mathcal{d} \mid \mathcal{p}) f_{\mathcal{P}}(\mathcal{p})}{\int f_{\mathcal{D} \mid \mathcal{P}} (\mathcal{d} \mid \mathcal{p}) f_{\mathcal{P}}(\mathcal{p}) \: d \mathcal{p}} 
}\end{equation} The integral in the denominator is a normalising
constant to ensure that the posterior density integrates to one.

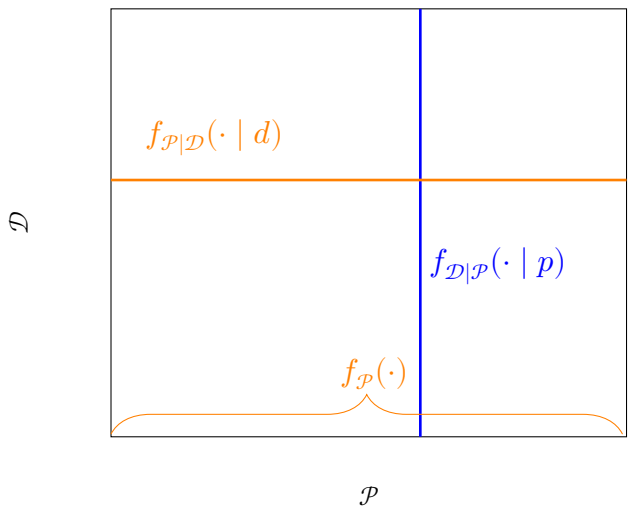
\begin{figure}

\centering{

\begin{tikzpicture}
  \begin{axis}[
    xmin=0, xmax=1,
    ymin=0, ymax=1,
    xtick=\empty, ytick=\empty,
    xlabel={$\mathcal{P}$},
    ylabel={$\mathcal{D}$}
    ]
    \addplot [ycomb, blue, line width=1pt] coordinates {(0.6, 1)};
    \addplot [xcomb, orange, line width=1pt] coordinates {(1, 0.6)};
    \node at (axis cs:0.2,0.7) {\large\color{orange}$f_{\mathcal{P} \mid 
        \mathcal{D}} (\cdot \mid d)$};
    \node at (axis cs:0.75,0.4) {\large\color{blue}$f_{\mathcal{D} \mid 
        \mathcal{P}} (\cdot \mid \mathcal{p})$};
  \end{axis}
  \node at (3.5,0.85) (prior) {\large\color{orange}$f_{\mathcal{P}} (\cdot)$};
  \draw [orange, decorate, decoration={brace, amplitude=15pt, raise=1pt}] (0,0) -- (6.8,0);
\end{tikzpicture}

}

\caption{\label{fig-twodim-inference}Graphical representation of the
relationship between frequentist and Bayesian inference. The horisontal
axis represents a parameter while the vertical axis represents the data.
The vertical, blue line is the direction of the frequentist inference,
the conditional variation of the data given the parameter. Analogously,
the horisontal, orange line is the conditional distribution of the
parameter given the data (i.e.~the posterior distribution) with which
Bayesian inference is concerned. Along the horisontal axis, the presence
of a prior distribution on the parameter space in the Bayesian inference
is represented by the orange \(f_{\mathcal{P}}\).}

\end{figure}%

We write \(\mathcal{P}\) for the unknown parameters and write
\(\mathcal{p}\) for an instance of \(\mathcal{P}\) and will generally
use lower-case letters to denote a realisation of a random variable
written in the corresponding upper-case.

The posterior probability in the Bayesian framework may be given a
frequentist interpretation as follows. Draw a parameter \(\mathcal{P}\)
from the prior distribution with density \(f_{\mathcal{P}}\) and draw a
set of data from the conditional distribution
\(f_{\mathcal{D} \mid \mathcal{P}}\). Repeating this many times yields a
frequency distribution of observations and parameter values and if we
focus on those parameters that are given by a specific observation of
data we obtain the density \(f_{\mathcal{P} \mid \mathcal{D}}\). That
this is exactly the posterior distribution follows from Bayes formula,
which states that the distribution resulting from the two-stage sample
procedure described above is proportional to the posterior distribution.
This property is sometimes said to state that the Bayesian posterior is
\emph{calibrated} and we will refer to this description of the posterior
as its \emph{calibration representation}. This is a frequency
distribution but, importantly, in a world defined by the specific choice
of prior, and thus it will not coincide with the conventional frequency
distribution that arises from repeated sampling for fixed
\(\mathcal{P}\) as defined by an experimental setting in the real world.

We will need the concept of the ``true'' parameters \(\mathcal{p}_0\),
which are the specific parameters that govern the data generating
process. In a Bayesian setting, this may be motivated through a
so-called Bernstein--von-Mises theorem (\citet{gelman2013bayesian},
Chapter 4) which states that,

\[
\sqrt{n} (\mathcal{P} - \mathcal{p}_0) \mid \mathcal{D} \overset{\sim}{\longrightarrow} N \left( 0, I^{-1}(\mathcal{p}_0) \right) ,
\]

where the convergence is in distribution and \(I(\mathcal{p})\) is the
Fisher information at \(\mathcal{p}\). Here \(\mathcal{p}_0\) are the
data-generating parameters. That is, if \(g(\cdot ; \chi)\) are
densities belonging to a family of distributions parametrised by
\(\chi\), then \(f_{\mathcal{D}} = g(\cdot ; \mathcal{p}_0)\). On a more
technical note the statement of the theorem is that \(\mathcal{p}_0\)
are the parameters governing the data generating process if the Bayesian
model is sufficiently rich to include these in the posterior
distribution. Otherwise, \(\mathcal{p}_0\) will be the parameters that
minimise the distance to the data-generating parameters in a
Kullback-Leibler sense. The theorem may also be interpreted as stating
that for large \(n\), the posterior distribution approximately resembles
a normal distribution with mean \(\mathcal{p}_0\) and variance
\(\frac{1}{n} I^{-1} (\mathcal{p}_0)\). This convergence, apart from the
technical sense just noted, is independent of the choice of prior
implying that the impact of the choice of prior decreases as more data
is obtained.

Having established these concepts, we may give meaning to various
frequentist concepts under the Bayesian paradigm. For instance, if
\(\hat{\mathcal{p}} = \hat{\mathcal{p}} (\mathcal{D})\) is an estimator
of \(\mathcal{p}_0\), viewed as a function of the data, the bias of the
estimator is
\(\cexpect{\hat{\mathcal{p}} - \mathcal{P}}{\mathcal{P} = \mathcal{p}_0} = \cexpect{\hat{\mathcal{p}}(\mathcal{D})}{\mathcal{P} = \mathcal{p}_0} - \mathcal{p}_0\).
Similarly if \(\phi\) is a statistical test, i.e.~an indicator function
for rejection, for the hypothesis \(\mathcal{P} = c\) for some constant
\(c\), then \(\cexpect{\phi}{\mathcal{P} = c}\) is the type 1 error rate
of the test.

Returning to the paradox described in Section~\ref{sec-introduction}, a
conventional approach would take \(\mathcal{P} = \Theta\) to be the
parameter of interest. We also explore the option of
\(\mathcal{P} = (\Theta, \Psi)\) being the combined parameter of
interest including the design parameter represented by the stopping
threshold \(\Psi\).

\subsubsection{Adaptive designs}\label{sec-adaptive-designs}

We consider a study consisting of \(N\) observations of a design
variable \(X\) and an outcome \(Y\). Thus, the study records the data
\((\mathbf{X}, \mathbf{Y})\), where we use the notation
\(\mathbf{X} = (X_1, \ldots, X_N)\) for a vector (and similarly for
\(\mathbf{Y}\)). We may as an example represent a two-arm study by
letting \(X_i \in \setdef{1 , 2}\) be an indicator of the arm to which
individual \(i\) was assigned and \(Y_i\) is the recorded outcome of the
same individual. As another example, suppose a study is conducted using
a poor data recording process which will delete an observation with some
small probability. Let \(X_i \in \setdef{0, 1}\) be an indicator for
recording of the information and set \(Y_i = \cdot\) if \(X_i = 0\) for
some arbitrary value \(\cdot\), and \(Y_i = \tilde{Y}_i\) when
\(X_i = 1\) and \(\tilde{Y}_i\) is the original outcome (our notation
here mirrors the treatment of missing data in \citet{rubin1976}). This
example may seem a bit contrived, but we will use this notation to
represent studies with interim analyses where we will set
\(Y_i = \cdot\) for any observation \(i\) that has not been observed due
to the trial having stopped. The random sample size of the trial is
\(\sum_{i=1}^N X_i\).

The study is conducted to draw inference about the unknown parameters
\(\Theta\), which are perceived as the parameters governing the
conditional distribution \(Y_i \mid X_i\). We write \(\theta\) for an
instance of \(\Theta\) and let \(\theta_0\) be the specific parameters
that govern the data generating process (the `true' parameter values).

When \(W\) is a random variable we write \(f_W\) for its density
function. If \(\mathbf{W} = (W_1, \ldots, W_K)\) is a vector, we write
\(\history{W_i} = (W_1, \ldots, W_{i-1})\) for the \emph{history} of
\(W\) up to \(i\) for \(i=1, \ldots, K\), with the convention that
\(\history{W_1}\) is the empty vector. Note the assumption that the
subscript \(i\) indexes the observations in the order in which they were
collected.

The likelihood function is the sampling distribution of the observed
data perceived as a function of the parameters,

\[
  L(\theta) = f_{(\mathbf{X}, \mathbf{Y}) \mid \Theta} (\mathbf{x}, \mathbf{y} \mid \theta) .
\]

By a fixed design we mean a design in which the design variable \(X_i\)
(for every \(i = 1, \ldots, N\)) does not depend on any other variable
in the design, so that the \(X\)'s could in principle be fixed before
the beginning of a study. As an example, we may consider the two arm
study where \(X_i \in \setdef{1, 2}\) indicates the treatment allocated
to participant \(i\). If \(X_i\) is determined be simple randomisation
(i.e.~flipping a coin), this would be an example of a fixed design, and
the allocations in the study would usually be generated prior to
beginning the study in a (blinded) randomisation list. Collecting
outcomes independent of each other, the likelihood function under the
fixed design is,

\begin{equation}\phantomsection\label{eq-fixed-likelihood}{
L(\theta) \overset{\text{fixed}}{=} \prod_{i = 1}^N f_{Y_i \mid X_i, \Theta} \left( y_i \mid x_i, \theta \right) .
}\end{equation}

We use \emph{directed acyclic graphs} (DAGs) as a convenient way to
represent (conditional) independence assumptions between random
variables (e.q. \citet{greenlandCausalDiagramsEpidemiologic1999}). A DAG
will include both parameters and data, so that the DAG describes the
joint distribution imposed by Bayesian inference as described in
connection with Figure~\ref{fig-twodim-inference}. For clarity, we have
inscribed parameters in a rectangle. The fixed design is illustrated in
the DAG in Figure~\ref{fig-fixed-design}.

\begin{figure}

\centering{

\pandocbounded{\includegraphics[keepaspectratio]{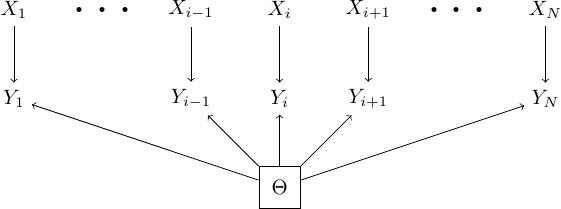}}

}

\caption{\label{fig-fixed-design}Fixed design DAG.}

\end{figure}%

We will say that a design is \emph{well-behaved} (W-B) if,

\begin{itemize}
\tightlist
\item
  \(Y_i\) is conditionally independent of the histories
  \(\history{X_i}\) and \(\history{Y_i}\) given the current design
  \(X_i\) and the parameter of interest \(\Theta\), and,
\item
  \(X_i\) is conditionally independent of the parameter of interest
  \(\Theta\) given the histories \(\history{X_i}\) and
  \(\history{Y_i}\).
\end{itemize}

The following result will be key to the discussion below (a proof is
given in Appendix \ref{sec-proof-prp-wb-likel}). Similar distinctions
are made in \citet{dawidIdentifyingConsequencesDynamic2010} and
\citet{kristensenAnalysingBiasIntroduced2025}.

\begin{proposition}[]\protect\hypertarget{prp-wb-likel}{}\label{prp-wb-likel}

If the design is well-behaved, the likelihood function is proportional
to the likelihood under the fixed design.

\end{proposition}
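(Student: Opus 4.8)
The plan is to factorise the joint sampling density of $(\mathbf{X}, \mathbf{Y})$ by the chain rule, taking the variables in the order in which they are accrued, and then to collapse the resulting conditional densities using the two defining properties of a well-behaved design.

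First I would write the likelihood as a telescoping product. Ordering the variables as they are collected — for each $i$ the allocation $X_i$ is set and then the outcome $Y_i$ is recorded — the chain rule gives
\[
L(\theta) = f_{(\mathbf{X},\mathbf{Y}) \mid \Theta}(\mathbf{x},\mathbf{y}\mid\theta) = \prod_{i=1}^N f_{X_i \mid \history{X_i}, \history{Y_i}, \Theta}\; f_{Y_i \mid X_i, \history{X_i}, \history{Y_i}, \Theta},
\]
each factor evaluated at the observed realisation and all conditional densities taken relative to a common dominating measure. (The existence and regularity of these conditional densities is routine and I would not dwell on it; in the interim-analysis encoding an unobserved $Y_i = \cdot$ contributes a point-mass factor equal to one, so stopped observations require no special treatment.)

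Next I would apply the two well-behaved conditions factor by factor. The first condition, $Y_i \indep (\history{X_i}, \history{Y_i}) \mid X_i, \Theta$, reduces the outcome factor to $f_{Y_i \mid X_i, \Theta}(y_i \mid x_i, \theta)$, i.e.\ exactly the $i$th factor of the fixed-design likelihood in equation \eqref{eq-fixed-likelihood}. The second condition, $X_i \indep \Theta \mid \history{X_i}, \history{Y_i}$, reduces the allocation factor to $f_{X_i \mid \history{X_i}, \history{Y_i}}$, which does not involve $\theta$. Collecting terms,
\[
L(\theta) = \left(\prod_{i=1}^N f_{X_i \mid \history{X_i}, \history{Y_i}}\right) \prod_{i=1}^N f_{Y_i \mid X_i, \Theta}(y_i \mid x_i, \theta),
\]
where the first product is a data-dependent constant free of $\theta$ and the second product is precisely $L(\theta)$ under the fixed design. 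The claimed proportionality follows.

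I expect the only real obstacle to be bookkeeping rather than anything deep: one must order the interleaved $X$'s and $Y$'s consistently with accrual so that the conditioning sets produced by the chain rule line up exactly with the ``histories'' appearing in the two well-behaved conditions, and one must verify that the accumulated allocation factor $\prod_i f_{X_i \mid \history{X_i}, \history{Y_i}}$ genuinely carries no dependence on $\theta$ — which is exactly the content of the second condition and the point at which a design that adapts on $\Theta$ itself, rather than merely on past data, would break the argument. Everything else is substitution.
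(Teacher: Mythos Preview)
Your proof is correct and is essentially identical to the paper's own argument: both apply the chain rule in the accrual order $X_1, Y_1, X_2, Y_2, \ldots$, then invoke the two well-behaved conditions to reduce the $Y_i$-factor to $f_{Y_i \mid X_i, \Theta}$ and to strip $\Theta$ from the $X_i$-factor, leaving the fixed-design likelihood times a $\theta$-free constant. Your additional remarks on ordering and on where a $\Theta$-adaptive design would break the argument are helpful commentary but do not constitute a different route.
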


\subsubsection{Some examples of designs}\label{sec-design-examples}

It is easy to verify that the fixed design is well-behaved. Another
example of a well-behaved design is given by the DAG in
Figure~\ref{fig-wb-design}. Here, we will let \(X_i\) be a function of
\(X_{i-1}\) and \(Y_{i-1}\) or, more generally, of the histories of
\(X_i\) and \(Y_i\). This will include for example so-called
dose-adaptive designs, where the dose \(X\) of the next individual
included in the study is determined from the previously allocated doses
and the previous responses.

\begin{figure}

\centering{

\pandocbounded{\includegraphics[keepaspectratio]{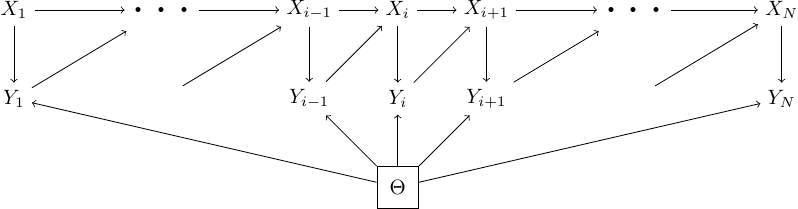}}

}

\caption{\label{fig-wb-design}Example of a well-behaved design, DAG.}

\end{figure}%

An example of a design that is not well-behaved could be the following.
Observe \(N\) binary responses \(\tilde{Y}_i \in \setdef{0, 1}\) and
define

\[
n_0 = \argmax_{i = 1, \ldots, N} \frac{1}{i} \sum_{j=1}^i \tilde{Y}_j
\]

to be the index that maximises the average response up to this index.
Set \(X_i = \indic{i \leq n_0}\) and define the outcomes \(Y_i\) as
\(\tilde{Y}_i\) if \(X_i = 1\), and \(Y_i = \cdot\) if \(X_i = 0\). In
other words, we run a study of size \(N\) but discard data after
\(n_0\), effectively pretending as though we had discontinued the study
once we reached the index that we know will lead to the largest average
response.

For this design it is apparent that for example the second condition for
a W-B design is not fulfilled, e.g.~\(X_{N-1}\) will conditional on the
history of \(X\)'s and \(Y\)'s up to \(N-1\) still depend on \(\Theta\)
though \(Y_N\) in violation of the second requirement for a W-B design.
While the design is not well-behaved the posterior distribution is still
calibrated. However, considering two investigators, one who performs a
fixed design and one who performs the design discarding all observations
after \(n_0\), the two posteriors will generally not be calibrated to
the same distribution, a point to which we will return in the discussion
below.

\section{Revisiting the paradox}\label{revisiting-the-paradox}

We now return to the paradox sketched in Section~\ref{sec-introduction}
above to address the two questions posed there. In the notation
introduced in the previous section, the study in the paradox may be
described as follows. Collect \(n\) outcomes \(Y_1, \ldots, Y_n\). The
study is given the option to stop which it will do if
\(\frac{1}{n} \sum_{i = 1}^n Y_i > \Psi\) for some prespecified
\(\Psi\), and otherwise another \(n\) observations
\(\tilde{Y}_{n+1}, \ldots, \tilde{Y}_{2n}\) will be collected. Here we
take \(\Psi\) to be a parameter (we return to this choice below). We
have design variables \(X_1, \ldots, X_{2n}\) where \(X_i = 1\) for
every \(i=1, \ldots, n\) and
\(X_i = \indic{\frac{1}{n} \sum_{i = 1}^n Y_i \leq \Psi}\) for
\(i=n+1, \ldots, 2n\). The last \(n\) observations are
\(Y_i=\tilde{Y}_i\) when \(X_i = 1\) and \(Y_i=\cdot\) when \(X_i=0\)
for every \(i=n+1,\ldots, 2n\). Since the design variables are constant
and equal to one for the first \(n\) observations and constant for the
remaining \(n\) observations, we may collapse the notation and simply
denote by \(X\) the indicator for continuation. We note that
investigator A is effectively running a fixed design study, and so we
may write
\(X = \indic{\frac{1}{n} \sum_{i=1}^n Y_i \leq \Psi} \indic{P = B} + \indic{P=A}\).
The random sample size is \((1+X)n\).

\begin{figure}

\centering{

\pandocbounded{\includegraphics[keepaspectratio]{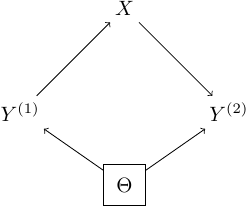}}

}

\caption{\label{fig-simple-case}Simple DAG for the paradox.}

\end{figure}%

The answer to Question 1 is supplied by the literature, and we shall
include its answer below for the sake of completeness.

\subsection{Question 1 and the conventional
analysis}\label{sec-q1-and-conventional-analysis}

Under a frequentist paradigm, the design of investigator B is readily
recognised as a group sequential trial including an interim analysis to
allow for stopping for efficacy and the accompanying literature
describes the consequences to both type 1 error rate and estimators
(\citet{jennisonGroupSequentialMethods1999}). To make this slightly more
concrete we consider estimation in the scenario where the sampling model
is \(Y_i \mid X, \Theta \sim N(\Theta, \sigma^2)\) for
\(i = 1, \ldots, (1+X)n\), where \(\sigma > 0\) is known. As already
noted the design is well-behaved and thus both investigators would
perform maximum likelihood using the same likelihood function, the MLE
being
\(\hat{\theta} (X, Y) = \frac{1}{(1+X)n} \sum_{i = 1}^{(1+X)n} Y_i\).
One may derive the expectation of the estimator under repeated sampling
(see Appendix \ref{sec-deriv-bias-expression}) as,

\begin{equation}\phantomsection\label{eq-bias-freq}{ 
\cexpect{\hat{\theta} (X, Y)}{\Theta, \Psi} = \Theta + \frac{\sigma}{2 \sqrt{n}} \phi \left(\frac{\sqrt{n}}{\sigma} \left[ \Psi - \Theta \right] \right)
}\end{equation}

where \(\phi\) is the density function of a standard normal
distribution. Letting \(\Phi\) denote the cumulative distribution
function of the standard normal distribution, we also note that the
conditional biases are given by
\begin{equation}\phantomsection\label{eq-conditional-bias-x0}{
\begin{aligned}
  \cexpect{\hat{\theta} (X, Y)}{\Theta, \Psi, X = 0} - \Theta
  &=
     \frac{\sigma}{\sqrt{n}} \frac{\phi \left(\frac{\sqrt{n}}{\sigma} \left[\Psi - \Theta \right]\right)}{\Phi \left(-\frac{\sqrt{n}}{\sigma} \left[\Psi - \Theta \right]\right)} ,
\end{aligned}
}\end{equation} and,
\begin{equation}\phantomsection\label{eq-conditional-bias-x1}{
\begin{aligned}
  \cexpect{\hat{\theta} (X, Y)}{\Theta, \Psi, X = 1} - \Theta
  &=
    - \frac{\sigma}{2\sqrt{n}} \frac{\phi \left(\frac{\sqrt{n}}{\sigma} \left[\Psi - \Theta \right]\right)}{\Phi \left(\frac{\sqrt{n}}{\sigma} \left[\Psi - \Theta \right]\right)} .
\end{aligned}
}\end{equation} Thus, the bias conditional on stopping \(X=0\) is of an
opposite sign from the conditional bias given continuation \(X=1\) and
the former will usually be numerically larger than the latter (unless
the continuation probability is less than \(1/3\)). The marginal bias in
(\ref{eq-bias-freq}) is a weighted average of the two conditional biases
and is smaller than either.

If both investigators were Bayesians, the paradox is usually formulated
by assuming that they would apply the same prior distribution on
\(\Theta\). If the investigators, as in the frequentist scenario above,
further agree on the sampling model, it follows from Bayes theorem and
Proposition~\ref{prp-wb-likel} that they would obtain exactly the same
posterior distribution and thus identical, Bayesian inference.

For the sake of completeness, we supply a standard Bayesian analysis in
the spirit of \citet{spiegelhalterBayesianApproachesRandomized1994},
which would proceed as follows. Taking the sampling model to be as
above, we would specify a prior distribution on the treatment effect
\(\Theta\). For analytical simplicity, we might take the prior to be
conjugate to the sampling model and specify
\(\Theta \sim N(\mu, \tau^2)\) and in line with the formulation of the
paradox we suppose that this prior is used by both investigators. By
Bayes formula, the posterior density is proportional to the product of
the likelihood and the prior density, and as noted above we arrive at
the same posterior for the two investigators. Standard calculations
(also see derivations in Appendix \ref{sec-derivations}) show that the
posterior is, \begin{equation}\phantomsection\label{eq-posterior-q1}{ 
    \Theta \mid X, Y \sim
    N
    \left( 
        \mu_{1+X}
        ,
        \sigma^2_{1+X}
    \right)
}\end{equation} with \[
\mu_{1+x}
=
\bar{y}_{\boldsymbol{\cdot}} \frac{\tau^2}{\tau^2 + \frac{\sigma^2}{(1+x)n}} + \mu \frac{\frac{\sigma^2}{(1+x)n}}{\tau^2 + \frac{\sigma^2}{(1+x)n}} 
\] and, \begin{equation}\phantomsection\label{eq-q2-mu-sigma-x}{
\sigma^2_{1+x}
=
\frac{\frac{\sigma^2}{(1+x)n} \tau^2}{\tau^2 + \frac{\sigma^2}{(1+x)n}}
}\end{equation} We note that \(\frac{\sigma^2}{(1 + x) n}\) is the
squared standard error (in a frequentist sense) of
\(\dotbar{y} = \sum_{i=1}^{(1+x)n} Y_i / ((1+x)n)\) so that the
posterior mean may be perceived as a weighted average between the
data-supplied estimate \(\dotbar{y}\) and the prior knowledge about the
mean as represented by \(\mu\), the weight being the relative difference
between the standard error and the prior variance \(\tau^2\). If
\(\tau \uparrow \infty\) an improper prior ensues that does not depend
on \(\mu\) and the posterior mean agrees with the frequentist estimate
\(\dotbar{y}\). This is in line with the more general observation, that
if a flat (i.e.~non-informative) prior is used, then maximising the
posterior amounts to maximising the likelihood, so that the posterior
mode estimate coincides with the maximum likelihood estimate, in this
case \(\dotbar{y}\). When a proper prior is used with \(\tau > 0\), the
estimate \(\dotbar{y}\) is shrunk towards the prior mean \(\mu\). If a
large study is performed, the standard error is small so that more
weight is placed on the estimate \(\dotbar{y}\) and the shrinkage is
small, while larger shrinkage occurs in a smaller study. Taking the
posterior mean as an estimator, it is biased in the frequentist sense:
Under repeated sampling of the data given the parameter, the mean of the
estimates does not equal \(\Theta\) --- indeed, it is biased by the
shrinkage factor. However, the posterior mean is unbiased in the
Bayesian sense, that it is exactly the expected value we would assign to
\(\Theta\) having observed the data (in light of the assumed prior
distribution). See \citet{sennCommentaryTransposedConditionals2008} for
a discussion of bias in this ``forward'' and ``backwards'' sense.
Finally, \(x=1\) in the paradox, and so the posteriors agree between the
two investigators when B continues past the interim.

\subsubsection{Example}\label{example}

We have simulated a small study in accordance with the formulation of
the paradox with \(n =\) 5 and true mean \(\theta_0=\) 2 and \(\sigma=\)
2. The study used \(\psi=\) 1 and observed \(\dotbar{y}^{(1)} =\) 0.77
and investigator B did thus not stop at interim. The final mean was
\(\dotbar{y} =\) 0.88. The data is shown in
Table~\ref{tbl-simData-table}.

Under the frequentist paradigm, the maximum likelihood estimator
\(\hat{\theta}_{\text{MLE}}\) is the final mean \(\dotbar{y} =\) 0.88.
In light of the bias in this estimator and the fact that the study did
not stop at the interim, investigator \(B\) might opt to apply a bias
correction and use the estimator,
\begin{equation}\phantomsection\label{eq-q1-freq-bias-correction}{
\hat{\theta}_{\text{BC}} = \hat{\theta}_{\text{MLE}} + \frac{\sigma}{2\sqrt{n}} \frac{\phi \left(\frac{\sqrt{n}}{\sigma} \left[\psi - \hat{\theta}_{\text{MLE}} \right]\right)}{\Phi \left(\frac{\sqrt{n}}{\sigma} \left[\psi - \hat{\theta}_{\text{MLE}} \right]\right)} .
}\end{equation} The bias-correction amounts to adding the conditional
bias from (\ref{eq-conditional-bias-x1}) and plugging in the maximum
likelihood estimate for \(\Theta\). Doing so yields
\(\hat{\theta}_{\text{BC}} =\) 1.2, which in this case brings the
estimate somewhat closer to the true mean.

Under the Bayesian paradigm, both investigators pose the prior
\(N(1, 4)\) (i.e.~\(\mu=\) 1 and \(\tau=\) 2).
Figure~\ref{fig-posteriors-q1} depicts the posterior distribution for
the two investigators with the theoretical posterior density from
(\ref{eq-posterior-q1}) overlayed. The Stan code used to draw from the
posterior distribution is given in Appendix \ref{sec-stan-program}.

\begin{longtable}[]{@{}rrr@{}}

\caption{\label{tbl-simData-table}Simulated data with \(n=5\),
\(\theta_0 = 2\), \(\sigma = 2\), and \(\Psi=1\). The column \texttt{y1}
is the data collected before the interim analysis, while \texttt{y2} is
the data collected after. The study did not stop at interim
(\texttt{x}=1).}

\tabularnewline

\toprule\noalign{}
y1 & y2 & x \\
\midrule\noalign{}
\endhead
\bottomrule\noalign{}
\endlastfoot
-0.0716906 & 3.509635 & 1 \\
1.5528526 & -2.461906 & 1 \\
1.8782791 & -1.299701 & 1 \\
0.2941379 & 2.021037 & 1 \\
0.2096947 & 3.169979 & 1 \\

\end{longtable}

\begin{figure}

\centering{

\pandocbounded{\includegraphics[keepaspectratio]{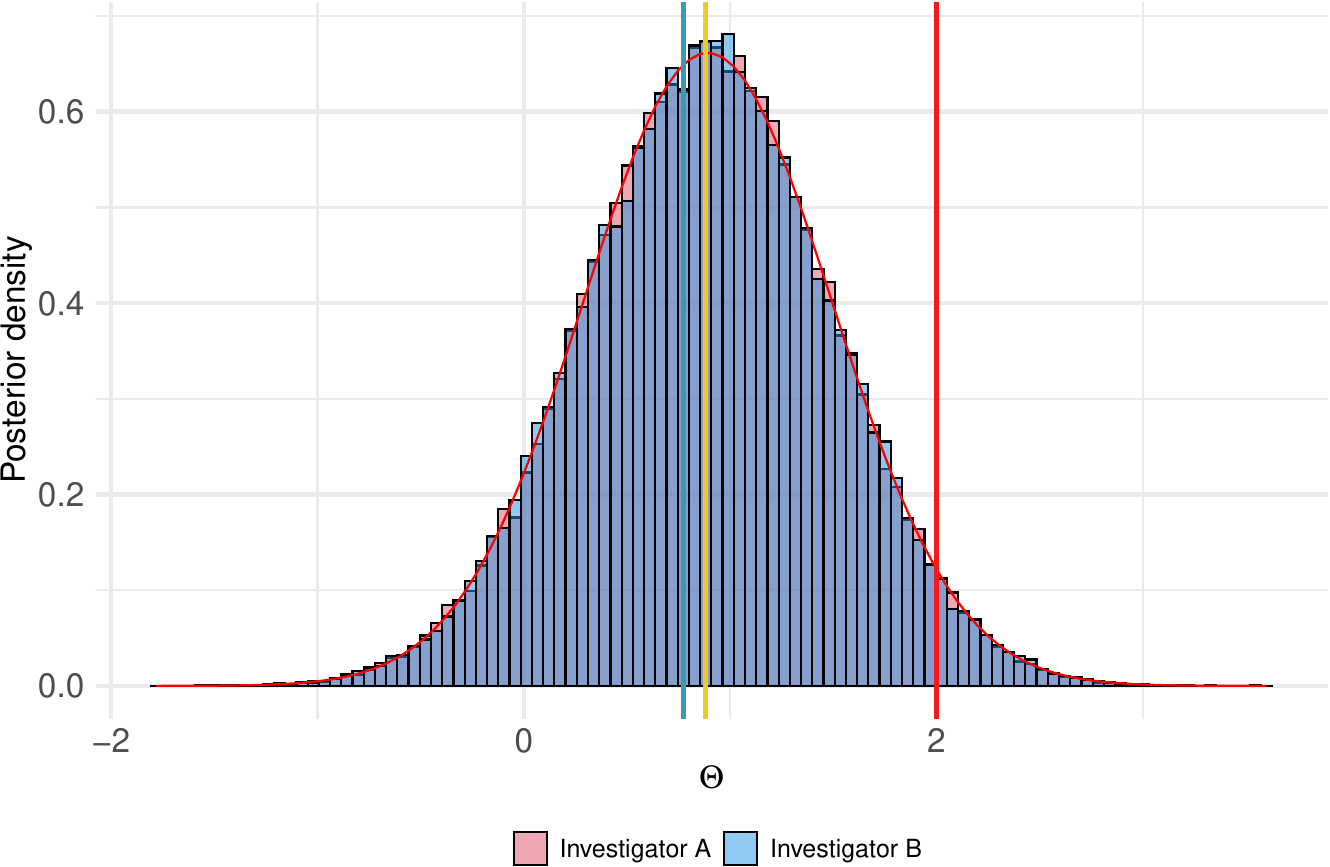}}

}

\caption{\label{fig-posteriors-q1}Posterior distribution for the two
investigators \(A\) and \(B\). The theoretical posterior from
(\ref{eq-posterior-q1}) is overlayed. Three vertical lines denote
\(\dotbar{y}^{(1)}\) (blue), \(\dotbar{y}\) (yellow), and \(\theta_0\)
(red).}

\end{figure}%

We see that the posterior distribution is the same for the two
investigators as in (\ref{eq-posterior-q1}) despite the investigators'
differing intentions regarding the interim analysis. Using the posterior
mean (coinciding here with the posterior mode) as point estimate for
\(\Theta\), the two investigators both estimate the mean as 0.8911.

In other words, investigator B will in the Bayesian paradigm apparently
not need to compensate in any way for their original intentions to stop
the trial in the presence of a stronger response (contrary to the
frequentist paradigm). This discrepancy between the frequentist and
Bayesian interpretation is the reason that the example may be described
as paradoxical.

\subsection{Question 2 and an alternative
interpretation}\label{sec-q2-and-alternative-interpretation}

Question 2 is, to our knowledge, put forth first by
\citet{sennStatisticalIssuesDrug2008} who identifies it as a difficulty
in posing the paradox from a Bayesian standpoint: If one accepts that a
Bayesian incorporates all their prior knowledge into the trial, one
would expect two Bayesians with the same priors to perform the same
trial. It would by contradiction follow from the different behaviour of
investigators \(A\) and \(B\) that the two should harbour different
priors for the parameters. This is in conflict with the formulation of
the paradox which necessitates identical priors. We analyse and expand
on this point in the following by expanding the setup to model how the
two investigators arrive at different intentions on the original design.

Let \(P = A, B\) be a random variable indicating the investigator.
Consider the DAG in Figure~\ref{fig-simple-case-expand}, which is an
expansion of Figure~\ref{fig-simple-case}. The DAG treats the design
parameter \(\Psi\) as a statistical parameter. This model setup posits
that the decision to stop at an interim (as represented by \(X\)) is a
combination of two sources, knowledge (as represented by \(\Psi\)) and
personality (as represented by \(P\)). In the setup, it is assumed that
a person (not necessarily investigator A or B) will employ the threshold
\(\Psi\), and while this is indeed the threshold employed by
investigator B, the threshold is by investigator A overruled in favour
of their personal preference to continue the trial. The broader decision
to represent \(\Psi\) as a random parameter may also require further
elaboration. Superficially, \(\Psi\) is set at the design phase of the
study and one may wonder it what sense it is random. Consider the
calibration representation of the posterior: We create a universe of
relevant trials by sampling the parameters from the prior distribution
and outcome variable from the likelihood function given the value of the
parameters. What is a reasonable assumption as to how these trials are
performed? One choice is to let all trials use the same \(\Psi\)
regardless of \(\Theta\). Another is to expand the universe of relevant
trials slightly and let each trial use different values of \(\Psi\).
Under this second option, is seems reasonable to allow the value of
\(\Psi\) to depend on \(\Theta\) so that each relevant trial may choose
their design parameter depending on the \(\Theta\) they have been
assigned. However, if \(\Psi\) is fixed, this corresponds to an
expression of uncertainty about the treatment effect yet complete
certainty about how the study should be designed. As we shall elaborate
on in the discussion. We may thus think of \(\Psi\) as an opportunity to
encode our uncertainty about the study design into the analysis. As a
related point, \citet{freedmanComparisonBayesianGroup1989} study
Bayesian stopping boundaries that depend on the precision of the prior
information on the treatment effect and show how these may, under
differing prior precisions, mimic prevalent frequentist boundaries.

\begin{figure}

\centering{

\pandocbounded{\includegraphics[keepaspectratio]{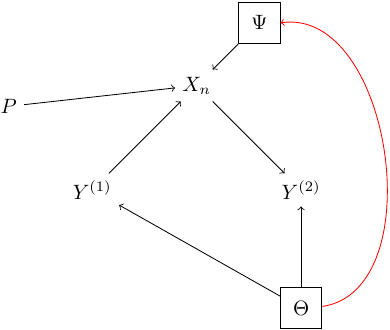}}

}

\caption{\label{fig-simple-case-expand}Expanded simple case DAG.}

\end{figure}%

Some thought is required in formulating the paradox and in particular
when defining the likelihood. We define the likelihood as conventionally
done: as the joint density of the observed data conditional on the
parameters. However: What constitutes the data? By whom is it observed?
What are the parameters? We approach the answers to these questions
inductively: The perhaps most obvious choice would be to consider the
density of the conditional distribution \({(Y,X,P) \mid \Theta, \Psi}\)
thus including all variables and parameters from the DAG in
Figure~\ref{fig-simple-case-expand}. But by whom is this data observed?
This is the data observed by us as formulators of the paradox. The two
investigators in the paradox do not observe \(P\). If we accept this
argument, we might propose to base our likelihood instead on
\({(Y,X) \mid P, \Theta, \Psi}\). Note that the corresponding prior is
\((\Theta, \Psi) \mid P\) which does not depend on \(P\) in the DAG in
Figure~\ref{fig-simple-case-expand}. This concedes that the two
investigators may have the same prior and is thus in line with the
formulation of the paradox from a Bayesian perspective. For investigator
A, the study is still a fixed design as in
Figure~\ref{fig-fixed-design}, since for investigator A, the
distribution of \(X\) is degenerate (\(X\) is always one) so that the
significance of the red arrow is immaterial.

We derive the posterior distribution of the treatment effect given the
investigator, \(f_{\Theta \mid X, Y, P}\), and the paradox may be
summarised as the question of whether this posterior depends on the
investigator \(P\). This involves marginalising over the design
parameter \(\Psi\) (e.g.~\citet{liseoEliminationNuisanceParameters2005})
and we obtain (details in Appendix \ref{sec-derivations-q2}),

\begin{equation}\phantomsection\label{eq-q2-posterior}{
\begin{aligned}
  f_{\Theta \mid X, Y, P}
  &=
    \frac{\left(\int  f_{X \mid Y^{(1)}, \Psi, P}  f_{\Psi \mid \Theta} \; d \psi \right) f_{Y^{(2)} \mid X, \Theta} f_{Y^{(1)} \mid \Theta} f_{\Theta}}{\int \left(\int f_{(X,Y) \mid \Theta, \Psi, P} f_{\Psi \mid \Theta, P} \: d \psi\right) f_{\Theta \mid P} \: d \theta }
.
\end{aligned}
}\end{equation} We note that the posterior, up to a normalising
constant, factors into two parts, where the second factor outside the
integral does not depend on the investigator \(P\). The integrand (the
design likelihood) does however depend on \(P\) and the integration is
over the conditional distribution of the design parameter \(\Psi\) given
\(\Theta\). It is seen that the crux of the argument is the red arrow in
Figure~\ref{fig-simple-case-expand} -- if \(\Psi\) is taken to be
independent of \(\Theta\) then the integral no longer depends on the
parameter of interest and may be absorbed into the normalising constant
so that the posterior no longer depends on the investigator. If,
conversely, there is dependence between the treatment and design
parameter, then the integral depends on both \(\Theta\) and the
investigator.

To illustrate further the difference between the posteriors of the two
investigators in the setup corresponding to
Figure~\ref{fig-simple-case-expand}, we continue our calculations from
the simple case considered in
Section~\ref{sec-q1-and-conventional-analysis} above. The sampling model
is \(Y_i \mid \Theta, X \sim N(\Theta, \sigma^2)\) for known
\(\sigma > 0\) and \(i = 1, \ldots, (1+X)n\). We continue our use of the
prior \(\Theta \sim N(\mu, \tau^2)\) and define the prior on \(\Psi\)
from a linear model in \(\Theta\) by setting, \[
\Psi = a + b \cdot \Theta + \epsilon , \quad \epsilon \sim N (0, \omega^2) \quad \epsilon \indep \Theta ,
\] We first note that for investigator A, the integral in the nominator
of (\ref{eq-q2-posterior}) is a constant in \(\theta\), and thus the
derivations of the posterior reduce to those performed above in
Section~\ref{sec-q1-and-conventional-analysis}. For investigator \(B\),
under the chosen model setup, the posterior admits an analytically
closed form (details in Appendix \ref{sec-derivations-q2}), the
posterior density being given by,
\begin{equation}\phantomsection\label{eq-posterior-q2-B}{
\begin{aligned}
&f_{\Theta \mid X, Y, P} (\theta \mid x, y, B) = \\
&\quad\ExpressionPosteriorDensity
\end{aligned}
}\end{equation}

This posterior distribution for \(\Theta\) is calibrated to a universe
of trials where \(\Psi\) is allowed to vary and is dependent on
\(\Theta\) as expressed through the hyperparameter \(b\). Note that for
\(b=0\) the posterior simplifies to the distribution in
(\ref{eq-posterior-q1}), so that the density in this case agrees between
the two investigators when they both continue beyond the interim
analysis (\(x=1\)). Moreover, the posterior mean may be derived as,
\begin{equation}\phantomsection\label{eq-q2-posterior-mean}{
\cexpect{\Theta}{X = x, Y = y, P = B}
=
\ExpressionPosteriorMean
}\end{equation} We see that the posterior mean agrees with the posterior
mean \(\mu_{1+x}\) of investigator A plus a second term. When \(b=0\)
this second term disappears, so that the term may be said to stem from
the design and it may be interpreted as a bias correction as we
elaborate on in the subsequent section.

\subsubsection{Example (continued)}\label{example-continued}

Figure~\ref{fig-posteriors-q2} depicts the posterior distributions of
investigator \(A\) and \(B\) having observed the same data as above
given in Table~\ref{tbl-simData-table}, but now under the expanded
formulation of the paradox. The hyperparameters of the priors are shown
in Table~\ref{tbl-hyper-expanded-paradox}.

\begin{longtable}[]{@{}rrrrrr@{}}

\caption{\label{tbl-hyper-expanded-paradox}Hyperparameters in the
expanded paradox.}

\tabularnewline

\toprule\noalign{}
\(\sigma\) & \(\mu\) & \(\tau\) & \(a\) & \(b\) & \(\omega\) \\
\midrule\noalign{}
\endhead
\bottomrule\noalign{}
\endlastfoot
2 & 1 & 2 & -0.5 & 1 & 0.1 \\

\end{longtable}

\begin{figure}

\centering{

\pandocbounded{\includegraphics[keepaspectratio]{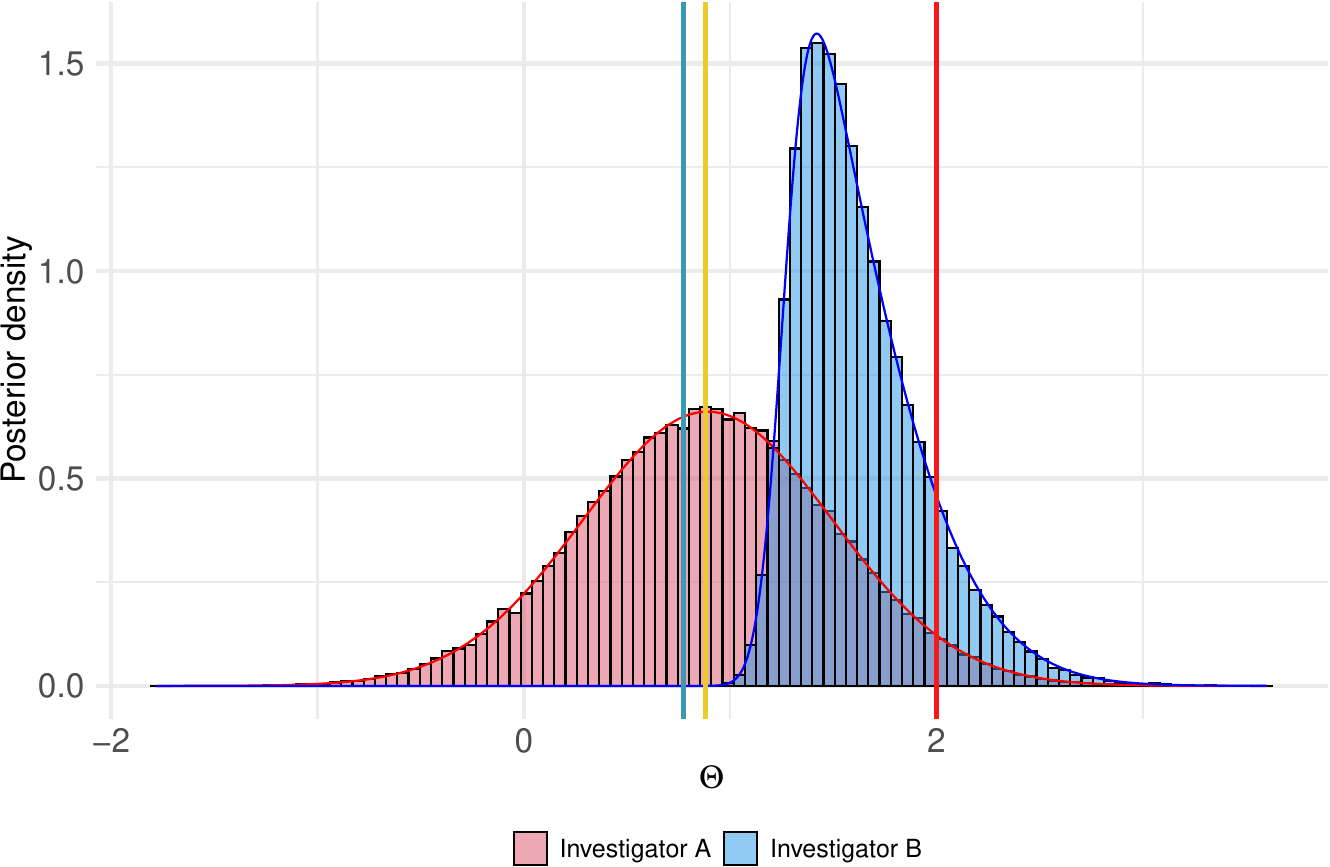}}

}

\caption{\label{fig-posteriors-q2}Posterior distribution for the two
investigators \(A\) and \(B\) under the expanded formulation of the
paradox. The superimposed red curve shows the theoretical posterior
(\ref{eq-posterior-q2-A}) for investigator \(A\) while the blue curve is
the theoretical posterior (\ref{eq-posterior-q2-B}) of investigator
\(B\). Three vertical lines denote \(\dotbar{y}^{(1)}\) (blue),
\(\dotbar{y}\) (yellow), and \(\theta_0\) (red).}

\end{figure}%

We see that the posterior for investigator A coincides with that in
Figure~\ref{fig-posteriors-q1} as expected. For investigator B the
posterior is shifted conciderably upwards and is also noticeably skewed
to the right. Posterior means and modes are shown in
Table~\ref{tbl-posterior-means-q2}. For both investigators, the
theoretical and empirical posterior means are the same (up to Monte
Carlo error), and the mean coincides with the mode for Investigator A,
as expected from the normal distribution, while the mode is slightly
smaller than the mean for investigator B as expected from the
right-skewed posterior.

\begin{longtable}[]{@{}lrr@{}}

\caption{\label{tbl-posterior-means-q2}Table of posterior mean and mode
estimates for the two investigators calculated empirically from the
posterior draws in Figure~\ref{fig-posteriors-q2}, or, for the mean,
analytically using formula (\ref{eq-q2-posterior-mean}).}

\tabularnewline

\toprule\noalign{}
& Investigator A & Investigator B \\
\midrule\noalign{}
\endhead
\bottomrule\noalign{}
\endlastfoot
Mean of posterior draws & 0.8887 & 1.6211 \\
Mode of posterior draws & 0.9270 & 1.3845 \\
Theoretical posterior mean & 0.8911 & 1.6247 \\

\end{longtable}

\subsection{Prior specifications and estimator
bias}\label{prior-specifications-and-estimator-bias}

The significance of the dependence between \(\Theta\) and \(\Psi\) in
the prior distribution may be framed in terms of estimation bias in the
frequentist sense. Following the line of inquiry in
\citet{dawidSelectionParadoxesBayesian1994}, we analyse the magnitude of
the frequentist bias as a function of the parameters and compare these
parameter scenarios to the choice of priors.

From (\ref{eq-bias-freq}) we see that the estimation bias depends on
\((\Theta, \Psi)\) only through \((\Psi - \Theta)^2\) and is
exponentially decreasing in this term, so that a large quadratic
difference in the two parameters corresponds to a small bias. The bias
is maximised when \(\Psi = \Theta\). The conditional biases given
stopping (\(X=0\)) and continuation (\(X=1\)) are on the other hand
maximised when \(\Psi \gg \Theta\) and \(\Psi \ll \Theta\),
respectively.

Note that the function \(z \mapsto \phi(z)/\Phi((-1)^{1-x} z)\) is
strictly increasing when \(x = 0\) and strictly decreasing for \(x=1\).
Consider the factor from the second term in the posterior mean
(\ref{eq-q2-posterior-mean}),
\begin{equation}\phantomsection\label{eq-q2-bias-correction-term}{
    (-1)^{1-x} b
    \frac{
    \sigma^2_{1 + x}
    }{\sqrt{\omega^2 + \sigma_{1+x}^2 b^2}}
    \frac{
    \phi
    \left(
    \frac{a + b \mu_{1 + x}  - \bar{y}_{\boldsymbol{\cdot}}^{(1)} }{\sqrt{\omega^2 + \sigma_{1+x}^2 b^2}}
    \right)
    }
    {
    \Phi
    \left(
    (-1)^{1-x}
    \frac{a + b \mu_{1 + x}  - \bar{y}_{\boldsymbol{\cdot}}^{(1)} }{\sqrt{\omega^2 + \sigma_{1+x}^2 b^2}}
    \right)
    }
}\end{equation} The term closely resemples the truncation factors in the
conditional biases (\ref{eq-conditional-bias-x0}) and
(\ref{eq-conditional-bias-x1}). We note that the sign of this extra term
depends on \(x\) similarly to the observation that the sign of the
conditional biases in (\ref{eq-conditional-bias-x0}) and
(\ref{eq-conditional-bias-x1}) are opposite and indeed the sign of the
term corrects in the opposite direction of the bias (if \(b > 0\)). Note
the significance of the first period mean \(\dotbar{y}^{(1)}\) in the
formula. This estimator is unbiased for \(\Theta\) since it is based on
data only up to the adaptation. Based on these observations, the
expression in (\ref{eq-q2-bias-correction-term}) may be perceived as an
estimator for the conditional bias (given a set of hyperparameters).

We may tune the hyperparameters of the \(\Psi\) prior to emulate the
situation where the bias is small, i.e.~where \(\Psi\) is numerically
very different from \(\Theta\) (so that the quadratic difference will be
large), by taking \(b = 0\) or taking \(\omega\) to be large
(i.e.~\(\omega \uparrow \infty\)). Inspecting the posterior mean we see
that it will coincide with that of investigator \(A\), since the second
term disappears. Thus, if our prior specification indicates that the
bias is small, the posterior distribution does not include the bias
correction term (\ref{eq-q2-bias-correction-term}).

When \(\omega\) is finite and \(b\) is non-zero a large (small) \(a\)
corresponds to a prior belief that \(\Psi \gg \Theta\)
(\(\Psi \ll \Theta\)) signifying a predisposition towards continuing
(stopping) the trial, and given this prior specification, the posterior
reacts by imposing a small (large) bias correction whenever the trial
stops (continues). These scenarios where \(\Psi \gg \Theta\) or
\(\Psi \ll \Theta\) correspond to those where the conditional biases are
maximised.

We may emulate the frequentist bias correction in
(\ref{eq-q2-bias-correction-term}) through the choice of
hyperparameters. We focus on the scenario with continuation (\(x=1\)) as
in our example. In order to emulate the frequentist paradigm we let
\(\tau\) tend to infinity to obtain a flat prior on \(\Theta\) which in
turn means that \(\sigma_{1+x}\) will tend to \(\sigma/\sqrt{2n}\) and
\(\mu_{1+x}\) to \(\dotbar{y}\). Take \(b>0\) and set
\(\omega = b \cdot \sigma/\sqrt{2n}\) and \(a = b \cdot \psi\), where
\(\psi\) is the stopping threshold used in the study. Then the
correction term from (\ref{eq-q2-bias-correction-term}) is, \[
    \frac{\sigma}{2 \sqrt{n}}
    \frac{
    \phi
    \left(
    \frac{\sqrt{n}}{\sigma}
    \left[
    \psi - (\frac{1}{b} \dotbar{y}^{(1)} - \dotbar{y} )
    \right]
    \right)
    }
    {
    \Phi
    \left(
    \frac{\sqrt{n}}{\sigma}
    \left[
    \psi - (\frac{1}{b} \dotbar{y}^{(1)} - \dotbar{y} )
    \right]
    \right)
    }
    .
\] This form of the correction term highlights the obvious similarities
but also a difference in the frequentist and Bayesian bias correction:
In the frequentist bias correction in
(\ref{eq-q1-freq-bias-correction}), the size of the correction depends
on the observed mean \(\dotbar{y}\). In the Bayesian correction, the
size of the correction may be made to depend on the difference between
the first period mean \(\dotbar{y}^{(1)}\) and the overall mean
\(\dotbar{y}\) (or equivalently comparing the first and second period
means) depending on the choice of the parameter \(b\) which acts as a
deflation parameter for the first period mean. The largest bias
correction is applied in the scenario where \(\psi\) is small (so that
we were quite likely to stop the study) and the first period mean is
large (but still bounded upwards by \(\psi\) or else we would not have
continued) and the second period mean is very small.

Finally if the overall and first period means have the same sign we
could take
\(b=\dotbar{y}^{(1)}/(\dotbar{y}^{(1)} + \dotbar{y}^{(2)}) = \dotbar{y}^{(1)}/2 \dotbar{y} > 0\)
to be the relative difference between the two and the Bayesian
correction term would coincide with the frequentist correction term.
Naturally, such a data-dependent choice of prior parameter would be
prohibited under a traditional Bayesian paradigm but it is of
theoretical interest not least because it shows how the frequentist
bias-adjusted estimator may be derived as an empirical Bayes estimator
(where hyperparameters are estimated from data). Thus, the posterior
mean in the present setup may be said to generalise the frequentist bias
corrected estimator when cast in this empirical Bayes framework.

\subsubsection{Example (continued)}\label{example-continued-1}

In the analysis of the data in the example above, assuming both
investigators give the posterior empirical mean as a point estimate,
investigator A would estimate \(\Theta\) as 0.89. Investigator B would
on the other hand, give the somewhat larger estimate 1.62, in
appreciation of the fact that the study did not stop at interim, so that
there is a risk of underestimating \(\theta\). In this specific example,
the bias correction moves the estimate closer to the true
\(\theta_0 = 2\).

\section{Discussion}\label{discussion}

It is not surprising that the investigators in the extended formulation
of the paradox should arrive at different posterior distributions since
as may be verified from Figure~\ref{fig-simple-case-expand} the design
is not well-behaved in the sense of Proposition~\ref{prp-wb-likel}: when
viewed marginal to \(\Psi\) the setup simply corresponds to drawing an
arrow from \(\Theta\) to \(X\) in Figure~\ref{fig-simple-case}. The
perhaps more poignant idea here is that the deviation from a
well-behaved design is not caused by the design \emph{per se} but rather
by the specific conceptualisation and modelling of the study through the
prior: Do we view \(\Psi\) as a parameter and, if so, does it depend on
\(\Theta\)?

The expanded DAG in Figure~\ref{fig-simple-case-expand} may be viewed as
a hierarchical model introducing dependence between \(\Theta\) and the
design variable \(X\) through \(\Psi\). The hierarchical construction is
also used in \citet{sennNoteConcerningSelection2008} (for a continuous
outcome) and \citet{mandelSelectionBiasConflict2009} (binary outcome).
The difference between these papers and the present is that in these two
papers (as in \citet{dawidSelectionParadoxesBayesian1994}) the focus is
on multiple parameters among which selection occurs. Similarly to our
findings, these papers also conclude that Bayesian inference is affected
by the selection if there is dependence between the parameters.
\citet{harvilleBayesianInferenceUnaffected2022} proposes to model
selection more directly as a modification of the likelihood component of
the posterior, but also discusses how this may be subsumed under the
prior. Similarly, note that the term in arising from our choice of prior
(the term in (\ref{eq-q2-integral-investigator-B}) in Appendix
\ref{sec-derivations-q2}) could also be moved to the likelihood part so
that the data correspond to those arising from a sort of truncated
normal distribution. However, we find the motivation of the term through
the prior more intuitive as a differing likelihood would imply a
different choice of sampling model and thus would not be a useful
setting for analysing the paradox where we have constrained the sampling
models to be the same.

The proposed hierarchical model that allows for a dependence between
\(\Theta\) and the design parameter \(\Psi\) is easy to implement as
shown in the Stan code in Appendix \ref{sec-stan-program}. Under a
simple sampling model and prior distribution we could derive the
posterior distribution in equation (\ref{eq-posterior-q2-B}). The
posterior is unlikely to be analytically tractable in more complicated
models but the general form of the posterior in equation
(\ref{eq-q2-posterior}) can be sampled using a Markov chain Monte Carlo
procedure such as Stan. We leave the potential of this framework for
constructing bias-adjusted estimators as a line of future research.

It must also be stressed that the expanded model in
Figure~\ref{fig-simple-case-expand} is only one of many extensions and
possible motivations for a dependence between the parameter of interest
\(\Theta\) and the design. As an alternative, one could for example
posit that the two are related through a common ancestor in the DAG
called ``clinical knowledge''. A related point is what one considers
(relevant) parameters in the study. In principle, as pointed out in
\citet{sennStatisticalIssuesDrug2008}, a prior distribution could be
attached to any auxiliary parameter for the design, e.g.~the sample
size, the expected gain and loss in terms of clinical efficacy and
health provider cost etc. The current practise however is to focus on
the prior for the parameter of interest, usually the treatment effect,
and Senn identifies this practise as stemming from early influential
papers such as \citet{spiegelhalterBayesianApproachesRandomized1994}
that argued for the use of Bayesian methods for clinical trials and
where this pragmatic approach is taken. Based on our investigations
above, a crucial point seems to be that this pragmatic approach may be
warranted in a fixed design, but does not necessarily translate to an
adaptive design: The arrow from \(\Theta\) to \(\Psi\) in
Figure~\ref{fig-simple-case-expand} does not matter to Investigator
\(A\) who is running a fixed design, and as such, the arrow could simply
be omitted. But the arrow matters tremendously for the interpretation of
the adaptive design of Investigator B, as it will determine whether the
interim analysis can be ignored or not when deriving the posterior
distribution.

We can motivate our results from the calibration representation of the
posterior as a sampling problem: For both investigators we construct a
universe of relevant trials by first sampling the parameters
\((\Psi, \Theta)\) and then the data conditional on the parameters for a
high number of repetitions. Suppose that we have observed data \(x=1\)
and \(y\). We recognise our observed data \((x,y)\) (since \(y\) is
continuous this is up to rounding) among some of the repetitions (for
investigator \(B\) we only look at repetitions where \(X=1\)) and wish
to form the posterior distribution from the corresponding \(\Theta\)'s.
Can we combine the samples from investigator A and B? When the
parameters are sampled independently from each other, the answer is an
affirmative: \(\Theta\) follows the same distribution
(\ref{eq-posterior-q1}) for the two investigators. However, when there
is dependence between the parameters, the samples for investigator \(B\)
are no longer representative for those for investigator \(A\). This may
be seen as an instance of collider stratification bias
(\citet{greenlandCausalDiagramsEpidemiologic1999}) when conditioning on
the data \((x,y)\) among the samples: When the red arrow is present in
Figure~\ref{fig-simple-case-expand} the conditioning opens a path
between the investigator and \(\Theta\). The difference in the
\(\Theta\) distribution is due to our conditioning on continuation
\(x=1\), which changes the distribution of \(\Psi\) (continuation is
more likely for larger \(\Psi\)'s) and thus (when they are dependent)
the distribution of \(\Theta\). For investigator \(A\) the distribution
is not changed, because the distribution of \(\Psi\) is immaterial when
always continuing, but for investigator \(B\) we expect a higher number
of large \(\Theta\)'s in our sample, since (when \(b>0\)) the parameters
are positively correlated which allows a large \(\Theta\) more often,
since it is likely to be accompanied by a large \(\Psi\). Calibration is
in this sense a fairly weak property: The two investigators will be
calibrated with themselves, but the two posterior distributions will not
necessarily agree, so that we might say that they are not calibrated to
each other. This is opposite to the view of the calibration property
taken in Frank Harrell's blog (\citet{harrellblog}).

Our final remarks concern the choice of priors. Arguably, there is no
such thing as an objective prior distribution. To specify a prior
independently of the investigator and analyst, an interesting line of
approach elicits expert opinions about the treatment effect and uses
these to construct a prior distribution
(e.g.~\citet{parmarMonitoringLargeRandomised2001}). Often, several
priors are specified by obtaining different sets of opinions for example
to construct a sceptical and optimistic prior among experts who are
respectively unconvinced and excited about the line of treatment. The
resulting posteriors then represent interpretations of the trial data
among such experts of diverging opinions. In light of our derivations,
it would be interesting to expand such an elicitation to also obtain
information on design, i.e.~asking the clinician first about their
expectations concerning the treatment effect (\(\Theta\)) and then
something like \emph{Suppose that you are given the option to stop the
trial at halfway to the final sample size, at which observed treatment
effect do you think it would be sensible to stop the trial?}. Having the
expert's opinions on the pairs \((\Theta, \Psi)\) it would be
interesting to see if any correlation exists between the two ``in the
wild''.

It has been appreciated, so for example with Lindley's paradox (see for
example \citet{shaferLindleysParadox1982}), that a discrepancy between
frequentist and Bayesian inference can be due to non-informative priors.
If one accepts that a prior must be placed on the pair of treatment and
design parameter \((\Theta, \Psi)\), then specifying the two as
independent would arguably be a sort of non-informativeness in the joint
prior distribution and thus place our findings in this more general
frame. Our experience is that the location and scale parameters of a
prior received comparably more attention than does e.g.~the correlations
in the prior distribution. Another and perhaps somewhat aporetic view of
the paradox attributes it to the fact that the frequentist and Bayesian
paradigms simply address two different questions (as shown by the
orthogonal inferences in Figure~\ref{fig-twodim-inference}), see
\citet{emersonBayesianEvaluationGroup2007}.

The Bayesian interpretation of the adaptive design of investigator \(B\)
yields a posterior distribution that is invariant to the interim
analysis only under the specific prior choice of independent \(\Theta\)
and \(\Psi\). Therefore, this invariance property should not be
considered universal to a Bayesian analysis. It is simply a results of
the posterior distribution reacting to the prior information that the
design contains no information about the parameter of interest. This
also yields an opportunity to let the posterior incorporate the design,
which may provide comfort to those
(e.g.~\citet{dawidSelectionParadoxesBayesian1994}) who find the lack of
reaction in the posterior distribution to the changed design a
``weakness'' of Bayesian inference. The prior should be carefully
specified and not some default chosen e.g.~by software as also concluded
in \citet{dawidSelectionParadoxesBayesian1994}. We make the additional
important point that one must, before specifying a prior, decide on
which parameters to model, as in an adaptive design, in contrast to a
fixed design, the role of design parameters may crucially affect the
interpretation of the trial data.

\section{References}\label{references}

\renewcommand{\bibsection}{}
\bibliography{bibliography.bib}

\appendix

\section{Appendix: Derivations}\label{sec-derivations}

\subsection{\texorpdfstring{Proof of
Proposition~\ref{prp-wb-likel}}{Proof of Proposition~}}\label{sec-proof-prp-wb-likel}

The statement of the the proposition follows directly from sequential
conditioning on the likelihood function,

\[
\begin{aligned}
L(\theta) 
    &= 
    f_{(\mathbf{X}, \mathbf{Y}) \mid \Theta}  \\
    &= 
    \prod_{i = 1}^N f_{(X_i, Y_i) \mid \history{X_i}, \history{Y_i}, \Theta} \\
    &= 
    \prod_{i = 1}^N f_{Y_i \mid X_i, \history{X_i}, \history{Y_i}, \Theta} 
        f_{X_i \mid \history{X_i}, \history{Y_i}, \Theta} \\
    &\overset{\text{W-B}}{=} 
    \prod_{i = 1}^N f_{Y_i \mid X_i, \Theta} 
        f_{X_i \mid \history{X_i}, \history{Y_i}} \\
    &\propto
    \prod_{i = 1}^N f_{Y_i \mid X_i, \Theta} ,
\end{aligned}
\]

where we in the second-to-last line applied the assumption of a
well-behaved design.

\subsection{\texorpdfstring{Bias expression in
Section~\ref{sec-q1-and-conventional-analysis}}{Bias expression in Section~}}\label{sec-deriv-bias-expression}

We first derive the frequentist bias leading to equation
(\ref{eq-bias-freq}). Note that the estimator may be written as,

\[
\hat{\theta} (X, Y) = \frac{1}{(1 + X)n} \left\{ \sum_{i=1}^{n} Y_i + X \sum_{i=n+1}^{2n} Y_i \right\}
\]

and that,

\[
\begin{aligned}
  \cexpect{\hat{\theta} (X, Y)}{\Theta, \Psi} =
  \cexpect{\hat{\theta} (X, Y)}{\Theta, \Psi, X = 0} \left[1 - \pi (\Theta, \Psi) \right]
  + \cexpect{\hat{\theta} (X, Y)}{\Theta, \Psi, X = 1} \pi (\Theta, \Psi)
\end{aligned}
\]

with \(\pi (\Theta, \Psi) = \cproba{X = 1}{\Theta, \Psi}\). Using
Lemma~\ref{lem-truncated-normal} below, we find that,

\[
\begin{aligned}
  \cexpect{\hat{\theta} (X, Y)}{\Theta, \Psi, X = 0} 
  &=
  \cexpect{\frac{1}{n} \sum_{i=1}^{n} Y_i}{\Theta, \Psi, \frac{1}{n} \sum_{i=1}^{n} Y_i > \Psi}  \\
  &=
    \Theta + \frac{\sigma}{\sqrt{n}} \frac{\phi \left(\frac{\sqrt{n}}{\sigma} \left[\Psi - \Theta \right]\right)}{1 - \Phi \left(\frac{\sqrt{n}}{\sigma} \left[\Psi - \Theta \right]\right)} ,
\end{aligned}
\] by using that
\(\frac{1}{n} \sum_{i=1}^{n} Y_i \mid \Theta, \Psi \sim N(\Theta, \frac{\sigma^2}{n})\).
Similarly, we find that,

\[
\begin{aligned}
  \cexpect{\hat{\theta} (X, Y)}{\Theta, \Psi, X = 1} 
  &=
    \Theta - \frac{\sigma}{2\sqrt{n}} \frac{\phi \left(\frac{\sqrt{n}}{\sigma} \left[\Psi - \Theta \right]\right)}{\Phi \left(\frac{\sqrt{n}}{\sigma} \left[\Psi - \Theta \right]\right)} ,
\end{aligned}
\]

and

\[
\pi(\Theta, \Psi) = \Phi \left(\frac{\sqrt{n}}{\sigma} \left[\Psi - \Theta \right]\right) .
\]

So plugging in yields,

\[
\begin{aligned}
  \cexpect{\hat{\theta} (X, Y)}{\Theta, \Psi} = \Theta + \frac{\sigma}{2\sqrt{n}} \phi \left(\frac{\sqrt{n}}{\sigma} \left[\Psi - \Theta \right]\right) .
\end{aligned}
\]

\begin{lemma}[]\protect\hypertarget{lem-truncated-normal}{}\label{lem-truncated-normal}

Let \(W \sim N(\mu, \sigma^2)\). Then

\[
\cexpect{W}{a \leq W \leq b} = \mu + \sigma\frac{\phi\left(\frac{a - \mu}{\sigma} \right) - \phi\left(\frac{b - \mu}{\sigma} \right)}{\Phi\left(\frac{b - \mu}{\sigma} \right) - \Phi\left(\frac{a - \mu}{\sigma} \right)}
\]

\end{lemma}

\begin{proof}
Let Z be standard normal. Then

\[
\begin{aligned}
  \cexpect{Z}{a \leq Z \leq b}
  &=
    \frac{1}{\proba{a\leq Z \leq b}} \int_a^b z \phi(z) \: dz \\
  &=
    \frac{1}{\Phi(b) - \Phi(a)} \int_a^b -  \phi'(z) \: dz \\
  &=
    \frac{\phi(a) - \phi(b)}{\Phi(b) - \Phi(a)} ,
\end{aligned}
\] and the result follows by \(W \sim \mu + \sigma Z\).
\end{proof}

\subsection{\texorpdfstring{Posterior in
Section~\ref{sec-q1-and-conventional-analysis}}{Posterior in Section~}}\label{sec-posterior-in-q1}

The argument is standard, cf.~for example \citet{gelman2013bayesian}.,
but is included here for the sake of completeness. Note that the
posterior, \[
f_{\Theta \mid X, Y} \propto f_{Y^{(2)} \mid X, \Theta} f_{Y^{(1)} \mid \Theta} f_{\Theta}
\] so that denoting by \(c\) any constant in \(\theta\), we have that,
\begin{equation}\phantomsection\label{eq-posterior-q1-calc}{
\begin{aligned}
    \log f_{\Theta \mid X, Y} 
    &=
    - \frac{1}{2 \sigma^2} \sum_{i=1}^{(1+x)n} (y_i - \theta)^2 - \frac{1}{2 \tau^2} (\theta - \mu)^2 + c
    \\
    &=
    - \frac{1}{2 \sigma^2} \sum_{i=1}^{(1+x)n}
    \left\{
    \theta^2 \left(1 + \frac{\sigma^2}{\tau^2 (1+x)n} \right) 
    - 2 \theta \left( y_i + \mu \frac{\sigma^2}{\tau^2 (1+x)n} \right) 
    \right\} + c
    \\
    &=
    - \frac{1}{2 \sigma^2}
    \left\{
    \theta^2 \frac{\tau^2 (1+x)n + \sigma^2}{\tau^2} 
    - 2 \theta \left( y_{\boldsymbol{\cdot}} + \mu \frac{\sigma^2}{\tau^2} \right) \sqrt{\frac{\tau^2 (1+x)n + \sigma^2}{\tau^2}} \sqrt{\frac{\tau^2}{\tau^2 (1+x)n + \sigma^2}}
    \right\} + c
    \\
    &=
    - \frac{1}{2 \sigma^2}
    \left(
    \theta \sqrt{\frac{\tau^2 (1+x)n + \sigma^2}{\tau^2}} -
    \left( y_{\boldsymbol{\cdot}} + \mu \frac{\sigma^2}{\tau^2} \right) \sqrt{\frac{\tau^2}{\tau^2 (1+x)n + \sigma^2}}
    \right)^2 + c
    \\
    &=
    - \frac{1}{2 \frac{\sigma^2 \tau^2}{\tau^2 (1+x)n + \sigma^2}}
    \left(
    \theta -
    \left( y_{\boldsymbol{\cdot}} + \mu \frac{\sigma^2}{\tau^2} \right) \frac{\tau^2}{\tau^2 (1+x)n + \sigma^2}
    \right)^2 + c
\end{aligned}
}\end{equation} and we conclude that the posterior distribution is, \[
N
\left( 
 y_{\boldsymbol{\cdot}} \frac{\tau^2}{\tau^2 (1+x)n + \sigma^2} + \mu \frac{\sigma^2}{\tau^2 (1+x)n + \sigma^2} ,
 \frac{\sigma^2 \tau^2}{\tau^2 (1+x)n + \sigma^2}
\right)
\] or \[
N
\left( 
 \bar{y}_{\boldsymbol{\cdot}} \frac{\tau^2}{\tau^2 + \frac{\sigma^2}{(1+x)n}} + \mu \frac{\frac{\sigma^2}{(1+x)n}}{\tau^2 + \frac{\sigma^2}{(1+x)n}} ,
 \frac{\frac{\sigma^2}{(1+x)n} \tau^2}{\tau^2 + \frac{\sigma^2}{(1+x)n}}
\right)
\]

\subsection{\texorpdfstring{Posterior in
Section~\ref{sec-q2-and-alternative-interpretation}}{Posterior in Section~}}\label{sec-derivations-q2}

Consider the case where \[
Y_i \mid \Theta \overset{\text{iid}}{\sim} N(\Theta, \sigma^2), \quad \text{ for } i = 1,\ldots, n ,
\] and \[
Y_i \mid \Theta, X = 1 \overset{\text{iid}}{\sim} N(\Theta, \sigma^2), \quad \text{ for } i = n + 1,\ldots, 2n .
\]

The treatment effect prior is specified as, \[
\Theta \sim N(\mu, \tau^2) ,
\] while the design prior is given by, \[
\Psi = a + b \cdot \Theta + \epsilon, \quad \epsilon \sim N (0, \omega^2) ,  \quad \epsilon \indep \Theta .
\]

To derive the posterior, we must marginalise over the design parameter
\(\Psi\). Thus,

\[
\begin{aligned}
  f_{\Theta \mid X, Y, P}
  &=
    \frac{\left(\int f_{(X,Y) \mid \Theta, \Psi, P} f_{\Psi \mid \Theta, P} \; d \psi \right) f_{\Theta \mid P}}{\int \left(\int f_{(X,Y) \mid \Theta, \Psi, P} f_{\Psi \mid \Theta, P} \: d \psi\right) f_{\Theta \mid P} \: d \theta }
\end{aligned}
\]

We notice from the DAG in Figure~\ref{fig-simple-case-expand} that the
integrand (the \(\Psi\)-conditional likelihood), may be decomposed as,

\[
\begin{aligned}
  f_{(X,Y) \mid \Theta, \Psi, P}
  &=
    f_{Y^{(2)} \mid Y^{(1)}, X, \Theta, \Psi, P} f_{X \mid Y^{(1)}, \Theta, \Psi, P} f_{  Y^{(1)} \mid \Theta, \Psi, P} \\
  &=
    f_{Y^{(2)} \mid X, \Theta} f_{X \mid Y^{(1)}, \Psi, P} f_{Y^{(1)} \mid \Theta} ,
\end{aligned}
\]

so that plugging back into the expression for the posterior we obtain,

\[
\begin{aligned}
  f_{\Theta \mid X, Y, P}
  &=
    \frac{\left(\int  f_{X \mid Y^{(1)}, \Psi, P}  f_{\Psi \mid \Theta} \; d \psi \right) f_{Y^{(2)} \mid X, \Theta} f_{Y^{(1)} \mid \Theta} f_{\Theta}}{\int \left(\int f_{(X,Y) \mid \Theta, \Psi, P} f_{\Psi \mid \Theta, P} \: d \psi\right) f_{\Theta \mid P} \: d \theta }
.
\end{aligned}
\] where we have also used that \(f_{\Theta \mid P} = f_{\Theta}\) and
\(f_{\Psi \mid \Theta, P} = f_{\Psi \mid \Theta}\).

The integral in the numerator is, \[
\begin{aligned}
  \int
  &
    f_{X \mid Y^{(1)}, \Psi, P} (x \mid y^{(1)}, \psi, p)  f_{\Psi \mid \Theta} (\psi \mid \theta) \; d \psi
  \\
  &=
    \begin{cases}
      x & \text{ if } p = A \\
      \cexpect{\cproba{X = x}{Y^{(1)} = y^{(1)}, \Psi, P = B}}{\Theta = \theta} & \text{ if } p = B
    \end{cases}
\end{aligned}
\] We see for investigator \(A\) that the integral is independent of
\(\theta\), so that the posterior has the same form as in
Section~\ref{sec-q1-and-conventional-analysis}. As the sampling model
and \(\Theta\) prior is unaltered, we conclude that the posterior for
investigator A is still the normal distribution from
(\ref{eq-posterior-q1}), i.e.~the posterior density for investigator
\(A\) is, \begin{equation}\phantomsection\label{eq-posterior-q2-A}{
f_{\Theta \mid X, Y, P} (\theta \mid x, y, A)
=
\frac{1}{\sigma_{1+x}} \phi \left( \frac{\theta - \mu_{1+x}}{\sigma_{1+x}}\right)
}\end{equation} with \(x = 1\) (see (\ref{eq-q2-mu-sigma-x}) for the
expressions for \(\mu_{1+x}\) and \(\sigma_{1+x}\)).

Turning to investigator \(B\), the integral has the form
\begin{equation}\phantomsection\label{eq-q2-integral-investigator-B}{
\begin{aligned}
  \int
  &
    f_{X \mid Y^{(1)}, \Psi, P} (x \mid y^{(1)}, \psi, B)  f_{\Psi \mid \Theta} (\psi \mid \theta) \; d \psi
  \\
  &=
  \pi(\theta)^x \left[ 1 - \pi(\theta) \right]^{1-x}
\end{aligned}
}\end{equation} where, \[
\begin{aligned}
  \pi(\theta) 
  &=
  \cexpect{\cproba{\Psi \geq \frac{1}{n} \sum_{i=1}^n y_i}{Y^{(1)} = y^{(1)}, \Psi, P = B}}{\Theta = \theta}
  \\
  &=
  \left( 1 - \Phi \left( \frac{\bar{y}_{\boldsymbol{\cdot}}^{(1)} - (a + b \theta)}{\omega} \right) \right) ,
\end{aligned}
\] so that \[
\begin{aligned}
  \int
  &
    f_{X \mid Y^{(1)}, \Psi, P} (x \mid y^{(1)}, \psi, B)  f_{\Psi \mid \Theta} (\psi \mid \theta) \; d \psi
  \\
  &=
    \Phi \left( (-1)^{1-x} \frac{ (a + b \theta) - \bar{y}_{\boldsymbol{\cdot}}^{(1)} }{\omega} \right)
\end{aligned}
\] In other words, the posterior density for investigator \(B\) is
proportional to, \[
\Phi \left( (-1)^{1-x} \frac{ (a + b \theta) - \bar{y}_{\boldsymbol{\cdot}}^{(1)} }{\omega} \right)
\phi \left(\frac{\theta - \mu_{1+x}}{\sigma_{1+x}}\right) \frac{1}{\sigma_{1+x}}
\] where \((\mu_1, \sigma_1^2)\) and \((\mu_2, \sigma_2^2)\) are the
mean and variance of the normal distribution arising from updating the
prior \(f_{\Theta}\) using, respectively, the data \(Y^{(1)}\) and
\((Y^{(1)}, Y^{(2)})\) when \(b=0\) (see (\ref{eq-q2-mu-sigma-x}) for
the expressions). The normalising constant is the integral, \[
\begin{aligned}
  &
    \int_{\mathbb{R}}
    \Phi \left( (-1)^{1-x} \frac{ (a + b \theta) - \bar{y}_{\boldsymbol{\cdot}}^{(1)} }{\omega} \right)
    \phi \left(\frac{\theta - \mu_{1+x}}{\sigma_{1+x}}\right) \frac{1}{\sigma_{1+x}}
    \: d \theta \\
  \quad &=
          \expect{
          \Phi \left( (-1)^{1-x} \frac{ (a + b U) - \bar{y}_{\boldsymbol{\cdot}}^{(1)} }{\omega} \right)
          }
\end{aligned}
\] for \(U \sim N(\mu_{1+x}, \sigma^2_{1+x})\). Using the results from
\citet{demidenko_mixed_2013} Section 7.1.2, we obtain, \[
\begin{aligned}
  &
    \int_{\mathbb{R}}
    \Phi \left( (-1)^{1-x} \frac{ (a + b \theta) - \bar{y}_{\boldsymbol{\cdot}}^{(1)} }{\omega} \right)
    \phi \left(\frac{\theta - \mu_{1+x}}{\sigma_{1+x}}\right) \frac{1}{\sigma_{1+x}}
    \: d \theta \\
  \quad &=
          \Phi
          \left(
          \frac{
          (-1)^{1-x} \frac{a + b \mu_{1+x} - \bar{y}_{\boldsymbol{\cdot}}^{(1)} }{\omega}
          }{
          \sqrt{1 + \sigma_{1+x}^2 b^2/\omega^2}
          }
          \right)
          \\
  \quad &=
          \Phi
          \left(
          (-1)^{1-x}
          \frac{
          a + b \mu_{1+x} - \bar{y}_{\boldsymbol{\cdot}}^{(1)}
          }{
          \sqrt{\omega^2 + \sigma_{1+x}^2 b^2}
          }
          \right)
\end{aligned}
\] so that the posterior density is given by, \[
f_{\Theta \mid X, Y, P} (\theta \mid x, y, B)
=
\ExpressionPosteriorDensity
\] Using this, we can derive the moment generating function of the
posterior, \[
M(t) =
\cexpect{e^{t \Theta}}{X=x, Y=y, P = B}
=
\int_{\mathbb{R}} e^{t \theta} f_{\Theta \mid X, Y, P} (\theta \mid x, y, B) \: d \theta
\] Noting that, \[
\begin{aligned}
  e^{t \theta} \phi \left(\frac{\theta - \mu_{1+x}}{\sigma_{1+x}}\right) 
  &=
    \exp
    \left\{
    - \frac{1}{2 \sigma^2_{1+x}} \left(\theta - [\mu_{1+x} + t \sigma^2_{1+x}]\right)^2
    \right\}
    \exp
    \left\{
    \frac{t^2}{2} \sigma^2_{1+x} + \mu_{1+x} t
    \right\}
    \frac{1}{\sqrt{2 \pi}}
  \\
  &=
    \phi \left(\frac{\theta - [\mu_{1 + x} + t \sigma^2_{1+x}]}{\sigma_{1+x}}\right)
    \exp
    \left\{
    \frac{t^2}{2} \sigma^2_{1+x} + \mu_{1+x} t
    \right\}
\end{aligned}
\] we find, \[
\begin{aligned}
  M(t)
  &=
    \frac{
    \exp
    \left\{
    \frac{t^2}{2} \sigma^2_{1+x} + \mu_{1+x} t
    \right\}
    }{
    \Phi
    \left(
    (-1)^{1-x}
    \frac{
    a + b \mu_{1+x} - \bar{y}_{\boldsymbol{\cdot}}^{(1)}
    }{
    \sqrt{\omega^2 + \sigma_{1+x}^2 b^2}
    }
    \right)
    \sigma_{1+x}
    }
    \int_{\mathbb{R}}
    \Phi \left( (-1)^{1-x} \frac{ (a + b \theta) - \bar{y}_{\boldsymbol{\cdot}}^{(1)} }{\omega} \right)
    \phi \left(\frac{\theta - [\mu_{1 + x} + t \sigma^2_{1+x}]}{\sigma_{1+x}}\right)
    \: d \theta
    \\
  &=
    \frac{
    \exp
    \left\{
    \frac{t^2}{2} \sigma^2_{1+x} + \mu_{1+x} t
    \right\}
    }{
    \Phi
    \left(
    (-1)^{1-x}
    \frac{
    a + b \mu_{1+x} - \bar{y}_{\boldsymbol{\cdot}}^{(1)}
    }{
    \sqrt{\omega^2 + \sigma_{1+x}^2 b^2}
    }
    \right)
    }
    \mathbb{E}_{Z \sim N(0,1)}
    \left[
    \Phi \left( (-1)^{1-x} \frac{ (a + b [\mu_{1 + x} + t \sigma^2_{1+x} + \sigma_{1+x} Z ]) - \bar{y}_{\boldsymbol{\cdot}}^{(1)} }{\omega} \right)
    \right]
    \\
  &=
    \frac{
    \exp
    \left\{
    \frac{t^2}{2} \sigma^2_{1+x} + \mu_{1+x} t
    \right\}
    }{
    \Phi
    \left(
    (-1)^{1-x}
    \frac{
    a + b \mu_{1+x} - \bar{y}_{\boldsymbol{\cdot}}^{(1)}
    }{
    \sqrt{\omega^2 + \sigma_{1+x}^2 b^2}
    }
    \right)
    }
    \Phi \left( (-1)^{1-x} \frac{\frac{ (a + b [\mu_{1 + x} + t \sigma^2_{1+x} ]) - \bar{y}_{\boldsymbol{\cdot}}^{(1)} }{\omega}}{\sqrt{1 + \sigma_{1+x}^2 b^2 / \omega^2}} \right)
    \\
  &=
    \frac{
    \exp
    \left\{
    \frac{t^2}{2} \sigma^2_{1+x} + \mu_{1+x} t
    \right\}
    }{
    \Phi
    \left(
    (-1)^{1-x}
    \frac{
    a + b \mu_{1+x} - \bar{y}_{\boldsymbol{\cdot}}^{(1)}
    }{
    \sqrt{\omega^2 + \sigma_{1+x}^2 b^2}
    }
    \right)
    }
    \Phi
    \left(
    (-1)^{1-x}
    \frac{a + b [\mu_{1 + x} + t \sigma^2_{1+x} ] - \bar{y}_{\boldsymbol{\cdot}}^{(1)} }{\sqrt{\omega^2 + \sigma_{1+x}^2 b^2}}
    \right)
\end{aligned}
\] Using this we can for example derive the posterior mean as, \[
\footnotesize
\begin{aligned}
  &
    \cexpect{\Theta}{X = x, Y = y, P = B} \\
  &=
    \frac{d}{dt} M(t) \Bigr|_{t = 0}
  \\
  &=
    \left[
    \Phi
    \left(
    (-1)^{1-x}
    \frac{
    a + b \mu_{1+x} - \bar{y}_{\boldsymbol{\cdot}}^{(1)}
    }{
    \sqrt{\omega^2 + \sigma_{1+x}^2 b^2}
    }
    \right)
    \right]^{-1}
    \left\{
    \mu_{1+x}
    \Phi
    \left(
    (-1)^{1-x}
    \frac{a + b \mu_{1 + x}  - \bar{y}_{\boldsymbol{\cdot}}^{(1)} }{\sqrt{\omega^2 + \sigma_{1+x}^2 b^2}}
    \right)
    +
    \phi
    \left(
    (-1)^{1-x}
    \frac{a + b \mu_{1 + x}  - \bar{y}_{\boldsymbol{\cdot}}^{(1)} }{\sqrt{\omega^2 + \sigma_{1+x}^2 b^2}}
    \right)
    \frac{
    (-1)^{1-x}
    b \sigma^2_{1 + x}
    }{\sqrt{\omega^2 + \sigma_{1+x}^2 b^2}}
    \right\} \\
  &=
  \ExpressionPosteriorMean
\end{aligned}
\]

\section{Appendix: Stan program}\label{sec-stan-program}

\lstinputlisting[basicstyle=\small, language=Stan]{_simple.stan}

\end{document}